\DeclareFontFamily{OT1}{rsfs}{}
\DeclareFontShape{OT1}{rsfs}{n}{it}{<-> rsfs10}{}
\DeclareMathAlphabet{\mathscr}{OT1}{rsfs}{n}{it}
\newtheorem{theorem}{Theorem}[section]
\newtheorem{lemma}[theorem]{Lemma}
\newtheorem{corol}[theorem]{Corollary}
\newtheorem{prop}[theorem]{Proposition}
\theoremstyle{definition} }
\theoremstyle{remark} \newtheorem{remark}[theorem]{Remark}
\newtheorem{example}[theorem]{Example}}
\newcommand{\Qbb}{{\mathbb{Q}}}
\newcommand{\Pbb}{{\mathbb{P}}}
\newcommand{\Zbb}{{\mathbb{Z}}}
\newcommand{\cA}{{\mathscr A}}
\newcommand{\cD}{{\mathscr D}}
\newcommand{\cE}{{\mathscr E}}
\newcommand{\cL}{{\mathscr L}}
\newcommand{\cM}{{\mathscr M}}
\newcommand{\cO}{{\mathscr O}}
\newcommand{\cZ}{{\mathscr Z}}
\newcommand{\one}{1\hskip-3.5pt1}
\newcommand{\csm}{{c_{\text{SM}}}}
\newcommand{\integ}[3][X]{\int_{#3}\one(#2)\,d\mathfrak c_#1}
\newcommand{\hD}{{\widehat D}}
\newcommand{\oD}{{\overline D}}
\newcommand{\uD}{{\underline D}}
\newcommand{\uDelta}{{\underline \Delta}}
\newcommand{\oS}{{\overline S}}
\newcommand{\uS}{{\underline S}}
\DeclareMathOperator{\rk}{rk}
\DeclareMathOperator{\codim}{codim}
\title{
Chern class identities from tadpole matching in  type IIB and F-theory
}
\author{Paolo Aluffi}
\author{Mboyo Esole}
\address{
Mathematics Department, 
Florida State University,
Tallahassee FL 32306, U.S.A.
}
\email{aluffi@math.fsu.edu}
\address{
Afdeling Theoretische Fysica,
Celesijnenlaan 200D,
3001 Heverlee, Belgium
}
\email{Mboyo.Esole@fys.kuleuven.be}
\begin{document}

\begin{abstract}
In light of  Sen's weak coupling limit of F-theory as a type IIB orientifold, 
the compatibility of the tadpole conditions  leads to a non-trivial identity 
relating the Euler characteristics of an elliptically fibered Calabi-Yau
fourfold and of certain related surfaces.

We present the physical argument leading to the identity, and a 
mathematical derivation of a Chern class identity which confirms it, 
after taking into account singularities of the relevant loci. 
This identity of Chern classes holds in arbitrary dimension, and 
for varieties that are not necessarily Calabi-Yau. 

Singularities are essential in both the physics and the
mathematics arguments: the tadpole relation may be interpreted as
an identity involving stringy invariants of a singular hypersurface, and
corrections for the presence of pinch-points. The mathematical
discussion is streamlined by the use of Chern-Schwartz-MacPherson
classes of singular varieties. We also show how the main identity may be
obtained by applying `Verdier specialization' to suitable constructible
functions.
\end{abstract}

\maketitle


\section{Introduction}\label{intro}
Orientifold compactifications of Type IIB string theory on a Calabi-Yau
threefold in the presence of D3 and D7 branes can be geometrically
described by F-theory compactified on a Calabi-Yau fourfold
\cite{Vafa:1996xn, Sen:1997gv}.  Type IIB is defined in a
ten-dimensional Minkowski space while F-theory requires two additional
dimensions, which provide a geometric description of the axion-dilaton
field of type IIB as the complex structure of an elliptic curve (a
two-torus).  Solutions of type IIB at weak coupling usually have a
constant axion-dilaton field. F-theory provides a description of
solutions with a variable axion-dilaton field by allowing the elliptic
curve to be non-trivially fibered over a threefold. This construction
provides a beautiful identification of S-duality in type IIB as the
modular group of the elliptic curve.

Type IIB and F-theory are both severely constrained by `tadpole
conditions' which ensure that the total D-brane charges in a compact
space vanish as required by Gauss's law. Tadpole conditions realize
the physics wisdom according to which, in a compact space, the total
charge should vanish since fluxes cannot escape to infinity.  From a
dynamical perspective, tadpole conditions are consistency requirements
obtained from the local equations of motion and/or the Bianchi
identities by integrating them over appropriate compact spaces.  The
computation of tadpole conditions requires a detailed account of all
contribution to the D-brane charges. This is closely related to anomaly 
cancellations since the presence of Chern-Simons terms in the
D-brane action (needed for the cancellation of chiral and tensor
anomalies) implies that a D-brane usually carries lower brane
charges. In particular, in type IIB, a seven-brane has an induced D3
charge proportional to the Euler characteristic of the complex surface
(a cycle of real dimension four in the Calabi-Yau threefold) on which
it is wrapped.  In F-theory, the induced D3 charge is proportional to
the Euler characteristic of the Calabi-Yau fourfold.  It follows that
in the absence of other sources of D3 charge (like for example non-trivial
fluxes), the consistency of the F-theory/type IIB tadpole relations
leads to relations between the Euler characteristics of
the F-theory fourfolds and the surfaces wrapped by the seven-branes.
 
The link between type IIB orientifolds and F-theory is clearly expressed in the 
case of $\Zbb_2$-orientifold symmetry by Sen's weak coupling limit of F-theory 
\cite{Sen:1997gv}.  When the fourfold is realized as an elliptic fibration over a 
threefold, and Sen's weak coupling limit is used to produce the associated 
Calabi-Yau threefold, the relations can be recovered, as we show in this paper, 
at the price of dealing with singularities of the loci arising in the limit.
In this paper we analyze one representative class of examples of this situation, 
presenting both the physical argument leading to the relation (\S\ref{phy}), 
and a mathematical derivation of an identity of Chern classes which 
implies it (\S\ref{math}). 
In its form arising from physical considerations, the relation has the
following shape. Let $\varphi: Y\to B$ be an $E_8$ elliptic fibration 
over a nonsingular threefold~$B$, and assume that $Y$ is a Calabi-Yau
variety.
Following Sen (\cite{Sen:1997gv}), we can associate with $Y$ a
Calabi-Yau threefold $X$, obtained as the double cover $\rho: X \to B$ 
ramified along a nonsingular surface~$O$; at `weak coupling limit',
the discriminant of $Y\to B$ determines an orientifold supported on $O$,
and a D7-brane supported on a surface $D$ in $X$. 
Comparing the D3 tadpole condition as seen in F-theory and in type IIB  
leads to the (tentative) relation 
\begin{equation*}
\tag{$\dagger$}
2\,\chi(Y) \overset?= \chi(D) + 4\, \chi(O)
\end{equation*}
among Euler characteristics (cf.~\S\ref{phytadc}). However,
the surface $D$ is singular, and singular varieties admit several
  possible natural notions of `Euler characteristic'; it is not {\em a priori\/}
clear which one  should be employed as $\chi(D)$ in 
($\dagger$). By contrast $Y$ and~$O$ are both nonsingular, and $\chi(Y)$, 
$\chi(O)$ must refer to the usual topological Euler characteristic.

Analyzing this situation with mathematical tools, we can prove 
(Theorem~\ref{tadpole}) that in fact the relation ($\dagger$) holds at 
the level of total homology Chern classes, provided that suitable 
correction terms
are factored in to account for the singularities of $D$:
\begin{equation*}
\tag{$\ddagger$}
2\, \varphi_* c(Y) = \pi_* c(\oD) + 4\, c(O) - \rho_* c(S)
\end{equation*}
implying
\begin{equation*}
\tag{$\dagger'$}
2\, \chi(Y) = \chi(\oD) + 4\, \chi(O)-\chi(S)\quad.
\end{equation*}
Here, $\oD\to D$ is a resolution of $D$ (in fact, its normalization), $\pi:
\oD\to B$ is the composition $\oD \to D \to B$, and $S$ stands 
for the pinch locus of $D$. If $\dim B=3$ (the case of 
physical interest), $\chi(S)$ simply counts the number of pinch-points of $D$. 
However (and surprisingly) ($\ddagger$) holds in arbitrary dimension,
and independently of Calabi-Yau hypotheses. Thus, it appears that the
scope of the tadpole conditions is actually substantially more general than the 
context in which they arise.

The term $\pi_* c(\oD)$ could be interpreted as the (push-forward to $B$ 
of the) {\em stringy Chern class\/}
of $D$, and the question remains of whether the corrected class
$\pi_* c(\oD)-\rho_* c(S)$, resp.~its degree $\chi(\oD)-\chi(S)$, are 
mathematically
`natural' notions. We take a stab at this question in \S\ref{spec}, where a
mechanism is proposed which appears to account precisely for the needed
correction term, at least in the class of examples considered in this paper.
We point out that the ingredients used to define the stringy
Chern class (as in \cite{MR2183846}) may in fact be employed to define
other notions of Chern classes and Euler characteristics $\chi^{(m)}$ for 
singular varieties, depending on a parameter~$m$. 
Each value of this  parameter corresponds to a different candidate for
`relative canonical divisor' of the resolution map; for the examples considered 
in this paper, $m=1$ corresponds to the notion leading to stringy invariants,
while $m=2$ corresponds to an alternative notion (leading to `arc' invariants; 
the $\Omega$-flavor considered in \cite{MR2183846}). As we will see in 
\S\ref{spec}, $\chi^{(m)}$ admits a well-defined limit as $m\to \infty$;
and this Euler characteristic $\chi^{(\infty)}$ recovers precisely the relation 
($\dagger$) proposed by the string-theoretic considerations 
(Theorem~\ref{MF}). Therefore, this  appears to be the natural notion in the 
context of this problem.

However, there is room for surprises, and it is not impossible that in more
general situations the singularities modify the tapdole relation in a different 
ways  than we have anticipated here.
Although this will have no effect on the correctness of the mathematical result 
of this paper  (the physics providing just an ansatz from the mathematical 
perspective) it would surely reshape the physics.  
To settle the issue, a complete physical derivation of the tapdole conditions 
taking into account the  singularities would be appropriate.  
Tadpole conditions  in string theory are usually obtained by a  loop calculation or 
by using the inflow mechanism. Both roads have their shortcoming  in presence 
of singularities\footnote{A loop calculation requires a  definition of the field 
theory in presence of  singularities. This is  usually possible when the singularities
are very mild  like for example if they are of the orbifold type.     Performing an 
inflow calculation based on  index theorems is also not free of additional 
assumptions since it will require extending the usual index theorem to  singular
varieties.  Such an extension would depend on the choice of regularization of the
singularities or in other words on the choice of definitions for the  topological 
invariant  of a  singular variety. As we have seen, several non-equivalent choices 
are possible.}.
It is therefore refreshing to know that the point of view presented in this paper 
is  corroborated by  an analysis of different physical aspects of the  system 
(\cite{ACDE}).  In particular it is shown in (\cite{ACDE})  that  the choice of 
the Euler characteristic $\chi^{(\infty)}$ is compatible with a ``deconstruction'' 
description of the  brane configuration   in terms of  a system of D9-anti-D9 
branes with appropriate  world-volume fluxes turn on.\vskip 6pt

This paper is aimed at both physicists and mathematicians; \S\ref{phy}
is written with the former public in mind, and~\S\ref{math} with the latter.
In order to enhance readability, these sections are essentially independent
of each other, and the hurried reader of one sort may ignore the section
more squarely 
written for the other. But the most interesting aspect of our results lies in 
the interplay between the two viewpoints.\vskip 6pt

We include in this introduction a few slightly more technical comments.
The Chern class identity we prove (Theorem~\ref{tadpole}) holds
in arbitrary dimension, for varieties which are not necessarily Calabi-Yau,
and generalizes in the simplest possible way the relation among Euler
characteristics predicted by the tadpole condition presented in 
\S\ref{phytadc}. 
In fact, Theorem~\ref{tadpole} is established by lifting to this level
of generality the elementary {\em Sethi-Vafa-Witten formula\/}
(formula (2.12) in~\cite{MR1421701}) for the 
Euler characteristic of the fibration~$Y$, and comparing it with an 
analogous formula obtained at Sen's weak coupling limit. We view
the Sethi-Vafa-Witten formula as a general statement equating the
Euler characteristic of the fibration with twice the Euler characteristic
of a specific divisor $G$ in the base $B$ of the fibration 
(Proposition~\ref{SVWform}). 
The arguments in \S\ref{math} are streamlined by using the calculus
of {\em constructible functions,\/} which encodes the good properties of
topological Euler characteristic and (by a result of R.~MacPherson) of
Chern classes. The relevant facts are recalled in \S\ref{math}. As a 
concrete example, we offer the following instance of the situation 
considered in this article.

\begin{example}\label{23328}
A degree~24 hypersurface with equation $y^2=z^3 + fz+g$ in weighted 
projective space $\Pbb_{1,1,1,1,8,12}$ (with $y$, resp.~$z$ of degree 
$12$, resp.~$8$, and $f$, $g$ general polynomials in the other variables) 
determines a Calabi-Yau elliptic fibration $Y\to B$, with $B=\Pbb^3$. 
Standard methods (for example, judicious use of the adjunction formula) 
easily yield $\chi(Y)=23328$.
 Sen's weak coupling 
limit leads us to consider surfaces $O$, resp.~$\uD$ in $\Pbb^3$
with equation 
$h=0$, resp.~$\eta^2-12 h \chi=0$, where $h$, $\eta$, $\chi$ 
are general polynomials of degrees~$8$, $16$, $24$ respectively. 
Again, adjunction yields immediately that $\chi(O)=304$; as for $\uD$,
the presence of $8\cdot 16\cdot 24=3072$ nodes at the intersection
$S$ given by $h=\eta=\chi=0$ has to be taken into account, and gives 
$\chi(\uD)=28864-3072=25792$.

The associated Calabi-Yau threefold~$X$ at weak coupling limit is the 
double cover of $\Pbb^3$ branched over $O$; $D$ is the inverse image 
of $\uD$ in $X$. The orientifold and D7 brane are localized on $O$ and 
$D$, respectively.

A simple local analysis shows that $D$ is singular along a curve, with
a set $S$ of $3072$ pinch-points corresponding to the set $\uS$ of 
nodes on $\uD$. It also shows that $D$ may be resolved to a nonsingular 
surface $\oD$ by blowing up the singular curve; the composition 
$\pi: \oD \to \uD$ is found to be $2$-to-$1$ everywhere except over 
$\uS$. In terms of constructible functions, this says that the push-forward 
of the constant function~$\one_{\oD}$~is
$$\pi_*(\one_{\oD})=2\cdot \one_{\uD}-\one_\uS\quad:$$
the function which equals $2$ on $\uD\smallsetminus \uS$ and $1$ on 
$\uS$. It follows then immediately that
$$\chi(\oD)=2\chi(\uD)-\chi(\uS) = 2\cdot 25792-3072 = 48512\quad;$$
what's more, the same relation must hold among the total Chern classes
of these loci (cf.~property (2)~in \S\ref{math}). 

We stress that
more sophisticated intersection-theoretic tools are {\em not\/} needed 
in order to extract this information from the blow-up $\oD \to D$.
These simple considerations suffice to verify the tadpole relation with 
pinch-point correction in this example:
$$2\cdot \chi(Y)=2\cdot 23328=46656=48512+4\cdot (-304)-3072
=\chi(\oD)+4\chi(O)-\chi(S)\quad.$$
\end{example}
Note that ignoring  singularities would have led us to an embarrassing 
tadpole mismatch between type IIB and F-theory at weak 
coupling\footnote{If we consider the case of several D7 branes, one can 
show that there is a unique configuration which satisfies the tadpole 
relation. Namely,   two D7  branes wrapped around  two  smooth surfaces  
given respectively by polynomials of  degree 28 and 4 in the  Calabi-Yau 
three-fold.
However, such a configuration is not generic in type IIB. 
Moreover, in  F-theory, it seems not to be compatible with  Sen's description 
of the weak coupling limit. We will come back to this configuration in 
(\cite{ACDE}), see also section \ref{entract} of the present paper.}.
The proof presented in \S\ref{math} for the general case (at the level of 
Chern classes, in arbitrary dimension and without Calabi-Yau hypotheses) 
is no harder---modulo some intersection theory---than the proof 
sketched above for Example~\ref{23328}. Indeed, the key observation in the proof
of Theorem~\ref{tadpole} is precisely the same formula 
$\pi_*(\one_{\oD})=2\cdot \one_{\uD}-\one_\uS$ used above,
which is just as easy to prove in general as in Example~\ref{23328}. 
In~\S\ref{math} we also offer an alternative, slightly more sophisticated 
viewpoint on such relations of constructible functions (by means of 
{\em Verdier specialization\/}, see Remark~\ref{Verdier}) as a possible 
venue for more general results.     
 Intersection-theoretic invariants of singular varieties, including some
 computations involving blow-ups, also play a role in
 \cite{Andreas:1999ng}, in a comparison between the 
 $\mathrm{E}_8\times \mathrm{E}_8$ heterotic string and F-theory with  G-fluxes.
\vskip 12pt

{\em Acknowledgments.} The first-named author would like to thank the
Max-Planck-Institut f\"ur Mathematik in Bonn, where most of this
project was carried out.  The second-named author would like to thank
A. Collinucci, F. Denef, E. Diaconescu and M. Soroush for valuable
discussions. In particular he would like to thank F. Denef for
introducing him to the D3 brane mismatch in type IIB/F-theory.  He is
also grateful to the Max-Planck-Institut f\"ur Mathematik in Bonn, the
Yang Institute of Theoretical Physics at Stonybrook and the Fifth
Simons Workshop in Mathematics and Physics for hospitality.  This work
was supported in part by NSA grant H98230-07-1-0024, and by the
European Community's Human Potential Programme under contract
MRTN-CT-2004-005104 `Constituents, fundamental forces and symmetries
of the universe'.


\section{Physics}\label{phy}

In this section we present in some detail the physical motivation for
($\dagger$).  Roughly speaking, it results from a direct comparison of
the D3 brane tadpole condition in type IIB and in F-theory in the
situation where no fluxes are turned on.  We will introduce the
necessary notions in a pedagogic way from (\S\ref{TypeII}) to
(\S\ref{tadcons}); readers familiar with Sen's weak coupling limit of
F-theory, tadpole conditions, and dualities, could consider jumping
immediately to (\S\ref{phytadc}), without great harm. 
In (\S\ref{TypeII}), we give some basic notions on D-branes in type II
string theories; in (\S\ref{Dualities}) we review the
  (S-T-U-)dualities among type IIA, type IIB and M-theory. These
dualities are important to understand the dictionary between the
physics and the geometry of F-theory and M-theory; they also provide
an elegant derivation of the F-theory D3 brane tadpole condition.
F-theory is introduced in (\S\ref{Ftheory}) and Sen's weak coupling
limit is reviewed in (\S\ref{E8fibs}).  Tadpoles and anomalies are
discussed in (\S\ref{tadcons}); the case of a $\Zbb_2$ orientifold of
type IIB with O7-planes and D3 and D7 branes as well as the D3 tadpole
condition in F-theory are worked out in detail.
 
In (\S\ref{phytadc}), we derive a general form of the relation
({$\dagger$}) from \S\ref{intro}
with several possible D7 branes and O7-planes. 
Relation ($\dagger$), with only one D7 and one O7 branes, is the
generic situation in Sen's weak coupling limit. However, since
singularities are necessarily present
in Sen's weak coupling limit, ($\dagger$) must be modified to take
them into account.  Assuming that the final answer keeps the same
shape, we give evidence that the modified formula would be of type of
the relation ({$\dagger'$}). This will be confirmed in \S\ref{math}, 
where we analyze Sen's template situation in its natural
mathematical setting, and prove formula ({$\ddagger$}) (which implies
($\dagger'$)) for a larger class of varieties.
For those who are mostly interested in the Calabi-Yau case, we note
that ($\dagger'$) could be proved just by mimicking the treatment of
example \ref{23328}, using no more than the adjunction formula in the
spirit of (\cite{MR1421701}).  
While this computation is straightforward, the material in \S\ref{math}
 provides a deeper understanding of the geometry of the situation.

\subsection{Type II string theories and D-branes}\label{TypeII}
There are five consistent ten-dimen\-sional string theories.  Here we
will mostly be interested in type IIA and IIB string theories.  Each
of these two theories admit 32 supersymmetry generators organized into
two ten-dimensional Majorana-Weyl spinors with opposite chirality in
type IIA and the same chirality in type IIB.  Both theories contain in
their spectrum the following NS-NS ({\em Neveu-Schwarz\/}) fields: a
graviton, an antisymmetric two-form which couples to the fundamental
string, and a scalar field $\phi$ (called the {\em dilaton\/}) which controls the
string coupling $g_s$ in each of these theories, following the relation
$g_s \sim e^{-\phi}$.  They also contain RR ({\em Ramond-Ramond\/})
$(p+1)$-forms $C_{(p+1)}$ with $(p+1)$ odd in type IIA and even in
type IIB.  As a direct generalization of the charged particle in
Maxwell theory, a $(p+1)$-form naturally couples to an object extended
in $p$-spatial dimension. Indeed, as it evolves in spacetime, a
$p$-brane draws a world-volume $W^{(p+1)}$ of spacetime dimension
$(p+1)$ on which the $(p+1)$ potential $C_{(p+1)}$ can be evaluated as
{\small $\int_{W^{(p+1)}} C_{(p+1)}$}.  A $p$-brane charged under a
$(p+1)$-form admits a magnetic dual which is a $(d-p-4)$-brane, where
$d$ is the spacetime dimension of the ambient space in which the brane
lives. They are related by Hodge-conjugation $*F_{(p+2)}=F_{d-p-2}$
acting on their field strengths $F_{p+2}=\mathrm{d}C_{(p+1)}$.

The objects that carry the RR charges are not seen in perturbative
string theory.  It was realized by Polchinski
(\cite{Polchinski:1995mt}) that $p$-branes are actually naturally
present in string theory as loci on which open strings can end. For
that reasons, they are usually called {\em Dirichlet $p$-branes } or
 ({\em D$p$-branes}) since fixing the location of the ends of open
strings is realized by imposing Dirichlet boundary conditions.
 D-branes are {\em half-BPS objects}, which means they preserve only
half of the total amount supersymmetry.  A string with its two ends on
the same D-brane defines a $\mathrm{U}(1)$ gauge field (a 
$\mathrm{U}(1)$ bundle with a connection) on the world-volume of the 
brane. This is the {\em Chan-Paton bundle}.  Note that since it is defined 
from an open string, the Chan-Paton gauge field defined on the 
world-volume of a D-brane is a NS-NS field.
 
 Type IIA admits only D$p$-branes with $p$ even while $p$ is odd in type
IIB.  More precisely, in type IIA we have a D0 brane (also called the
{\em D-particle}) and a D2 brane, their magnetic duals
are respectively the D6 and the D4 brane.
There is also a D8 brane, which does not admits a magnetic dual.
  In type IIB there are a D(-1)
(or {\em D-instanton\/}), D1 (or {\em D-string\/}), D3, D5, D7 and D9
branes. The D7 and the D5 are the magnetic duals of the D-instanton
and the D-string. The D9 does not admit a magnetic dual in ten
dimensions. The fundamental string which is present both in type IIA
and type IIB is usually called the {\em F-string,\/} and couples
electrically to the NS-NS two-form. The magnetic dual of the F-string
is a 5-brane (the {\em NS5 brane}) which is present both in type IIA
and type IIB.

\subsection{Dualities and M-theory}\label{Dualities}
The five ten-dimensional string theories are related by a web of
dualities which also include eleven-dimensional supergravity. The
latter is the supersymmetric field theory of gravity with the highest
possible spacetime dimension for the usual Minkowski signature.  We
shall review quickly some of these dualities in order to understand
the origin of F-theory.

{\em T-duality} identifies the physics of type IIA compactified on a
circle of radius $r_A$ and type IIB compactified on a circle of radius
$r_B$ provided that these two radii are inverse of each other when
measured in string units. A $p$-brane wrapped around the T-duality
circle is T-dual to a $(p-1)$-brane not intersecting the T-duality
circle and vice versa.

{\em S-duality} is a symmetry which relates the weak coupling regime
of one theory to the strong coupling regime of another one.  It opens
a perturbative window in the strong coupling regime of a theory.  Type
IIB is its own S-dual. More precisely, the S-duality group in type IIB is
a $\mathrm{SL}(2,\Zbb)$ group of type IIB \cite{Hull:1994ys}.  The
S-dual theory of type IIA is an eleven-dimensional theory. The
 radius of the additional eleventh dimension grows as the coupling
constant of type IIA increases.  This eleven-dimensional theory is  
  called M-theory and at low energy it is described by eleven-dimensional 
supergravity. The latter admits a three-form potential
$A_{(3)}$ which can naturally couple to a membrane.  This is the {\em
M2-brane\/} in M-theory and its magnetic dual is 5-brane called the {\em
M5-brane.\/}

{\em U-duality} relates type IIB and M-theory by using a combination
of type IIA/IIB T-duality and type IIA/M-theory S-duality. More
precisely, it provides a duality between type IIB string theory
compactified on a circle $S^1$ and M-theory compactified on a torus
$T^2=S^1\times S^1$, where the first circle is type IIA T-duality
circle (of inverse radius than the type IIB radius) and the second one
is the M-theory circle that controls the string coupling of type IIA.
{\em Type IIB on a circle is dual to M-theory compactified on a torus
whose area is shrinking to zero. }  The modular group of this torus
precisely corresponds to the $\mathrm{SL}(2,\Zbb)$ S-duality group of
type IIB theory. This geometrization of S-duality is the main interest
of F-theory.

\subsection{F-theory}\label{Ftheory}
We consider type IIB compactified on a Calabi-Yau threefold with D3
and D7-branes.  All the branes are taken to be spacetime filling: they
fill all the four-dimensional spacetime and are wrapped around a cycle
of the Calabi-Yau threefold.  A D3-brane will be point-like in the
extra six dimensions and a 7-brane will wrap around a complex surface
 of the compact space while filling the four-dimensional spacetime.
In type IIB we have two type of strings: the D-string has a RR charge
while the F-string has a NS-NS charge.  More generally, a
$(p,q)$-string is the bound state of $p$ F-strings and $q$ D-strings
\cite{Witten:1995im} with $p,q$ relatively prime
\cite{Schwarz:1995dk}; a $(p,q)$-brane is a brane on which
$(p,q)$-strings can end \cite{Douglas:1996du,Gaberdiel:1997ud}.  The
usual D-brane is a $(1,0)$-brane.

The two real scalar-fields of type IIB are organized into a complex
axion-dilaton field
\begin{equation*}
\tau=C_{(0)}+\mathrm{i} e^{-\phi},
\end{equation*}
where the axion is the RR-scalar $C_{(0)}$, and $\phi$ is the dilaton
coming from the NS-NS sector; we recall that $e^{\phi}$ is the string 
coupling constant. The $\mathrm{SL}(2,\Zbb)$ symmetry of type IIB  
acts on the axion-dilaton field as a modular transformation:
\begin{equation*}
\tau \rightarrow \frac{a\tau +b}{c\tau +d}   \, ,\quad 
\begin{pmatrix}
a & b \\
c & d 
\end{pmatrix}
\in \mathrm{SL}(2,\Zbb).
\end{equation*}
The $(p,q)$-strings are the $\mathrm{SL}(2,\Zbb)$ images of the
fundamental string \cite{Schwarz:1995dk,Douglas:1996du}.
The existence of the 
$\mathrm{SL}(2,\Zbb)$ symmetry of type IIB forces us to
contemplate the occurrence of $(p,q)$-branes for any relatively prime
$(p,q)$. However, this would require going above the usual weak
coupling limit of type IIB, since the axion-dilaton field also changes
under $\mathrm{SL}(2,\Zbb)$ without preserving the scale of the
string coupling constant: strong and weak coupling can be mapped into
each other.

F-theory is a description of type IIB string theory in the presence of
$(p,q)$ 7-branes.  These branes are non-perturbative objects which
require a non-constant axion-dilation field.  Since the axion-dilaton
field $\tau$ is subject to modular transformations, Vafa
\cite{Vafa:1996xn} has proposed to describe it as the complex
structure of an elliptic curve. 
{\em It is conjectured that F-theory on an
elliptically fibered Calabi-Yau fourfold with a section and a base
$B$ is equivalent to type IIB on the base $B$ with $(p,q)$ 7-branes at
the singular loci of the elliptic fibration.\/} The 7-branes are wrapped
around divisors of the base $B$.  

The modular group in F-theory is also the same as the modular group of
the torus used to define U-duality between type IIB and M-theory. It
follows that F-theory on an elliptically fibered Calabi-Yau is dual to
M-theory on the same manifold in the limit where the fiber has a
vanishing area.  The three-form $A_{(3)}$ of M-theory reduces to the
NS-NS and RR two-forms of type IIA under S-duality. These two-forms
 under T-duality give the NS-NS two-form of type IIB and the RR
one-form of type IIB.
  We can then conclude that U-duality between type
IIB and M-theory implies that an M2 brane wrapping the two torus
defined by the T-duality circle of type IIA and the S-duality circle
of M-theory will give rise to $(p,q)$-strings in F-theory.

\subsection{The weak coupling limit of F-theory  with $E_8$ 
fibrations}\label{E8fibs}
An $\mathrm{E}_8$ elliptic fibration $\varphi: Y\rightarrow B$ has a
Weierstrass normal equation
\begin{equation*}
x y^2-(z^3+f z x^2+g x^3)=0,
\end{equation*}
written in a $\mathbb{P}^2$ bundle $\phi:\mathbb{P}(\cE)
\rightarrow B$. Here, $f$ and $g$ are respectively sections of
powers $\cL^4$, $\cL^6$ of a line bundle $\cL$ on the base $B$ 
(cf.~\S\ref{intromath}). The variety $Y$ is a Calabi-Yau ($K_Y=0$) if
$c_1(\cL)=c_1(TB)$.  
For every point of the base, the Weierstrass equation of the elliptic
fibration defines an elliptic curve with (Klein's) modular function
\begin{equation*}
j(q)=4 \cdot\frac{ (24 f)^3}{\Delta}.\label{modular}
\end{equation*}
The function $j$ is the generator of modular functions of weight 
one, and $\Delta$ is the discriminant of the elliptic curve:
\begin{equation*}
\Delta=4 f^3+27  g^2.
\end{equation*}
F-theory on the elliptically fibered Calabi-Yau fourfold $Y$ is 
(conjecturally) equivalent 
to type IIB on the base $B$ with $q=e^{2\pi i \tau}$. Since
$\Im(\tau)=e^{-\phi}=\frac{1}{g_s}$ is the inverse of the string
coupling constant, it follows that weak coupling ($g_s\ll 1$)
corresponds to small $q$ and therefore to large $j$ since in the 
limit of small $q$ we have $j\approx q^{-1}$.  The 7-branes are
located at points of the base manifold $B$ where the elliptic fiber is
singular (\cite{Sen:1997gv}); this is where $j(\tau)$ becomes
infinite. These surfaces correspond to the vanishing locus of the
discriminant $\Delta$. {\em A priori,\/} the discriminant locus may 
have several components~$\Delta_i$ which correspond to several 
$D7$-brane worldvolumes.
Perturbative string theory on IIB background has two $\Zbb_2$
symmetries: the world sheet parity inversion $\Omega$ and the
left-moving fermion number $(-1)^{F_L}$.
Given a Calabi-Yau threefold $X$ admitting an involution $\sigma$, 
we can mod out the spectrum of type IIB by the orientifold projection
\begin{equation*}
\Omega \cdot (-1)^{F_L}\cdot  \sigma^*, 
\end{equation*}
where $\sigma$ acts on the type IIB field via its pulback $\sigma^*$.
In order to have only D3 and D7 branes, the involution $\sigma$ is
required to be holomorphic with the additional property $\sigma^*
\Omega^{3,0}=-\Omega^{3,0}$, where $\Omega^{3,0}$ is the holomorphic
three-form of the Calabi-Yau three-form $X$.  Under the action of
$\sigma$, the fixed locus is made of complex surfaces and/or isolated
points. They correspond respectively to  {\em orientifold 7-planes\/} 
(O7 planes)  and O3 planes.

Following Sen (\cite{Sen:1997gv}) we set
\begin{align*}
f & = -3h^2 + c\eta,\nonumber\\
g & = -2 h^3 + c h \eta +c^2 \chi,
\end{align*}
where $c$ is a constant and $h,\eta,\chi$ are respectively general 
sections of
line bundles $\cL^2$, $\cL^4$, $\cL^6$ where $\cL$ is the anticanonical
bundle of $B$, as above.
We recall that large $j$ corresponds to large $\Im(\tau)$ and
therefore to weak coupling. The limit $c\rightarrow 0$ is called the
{\em weak coupling limit} since then $j(\tau)$ is large at every point
of the base except in sectors where $|h|^2\sim |c|$.  Since in the
weak coupling limit we have
\begin{align*}
\Delta \approx -9 c^2  h^2(\eta^2+12h\chi),\label{DeltaF}
\end{align*}
the zeroes of the dominant term of $\Delta$ are supported on $h=0$.  
Sen shows (\cite{Sen:1997gv}) that $h=0$ determines the locations of the
$O7$-planes, while the surface $\uD$ with equation $\eta^2+12 h\chi=0$ 
determines the locations of the $D7$-branes.

One can define a type IIB orientifold equivalent to the weak coupling
limit of F-theory starting with a Calabi-Yau threefold $X$ which is the
double cover of the base $B$ branched along $h=0$ (\cite{Sen:1997gv}):
\begin{equation*}
x_0^2=h,\label{SenCY}
\end{equation*}
where $x_0$ is a section of the line bundle $\cL$.  The
$\Zbb_2$-isometry that is gauged to describe the orientifold is
$x_0\rightarrow -x_0.$ The fixed points under this symmetry correspond
to the hypersurface $h=0$. The variety $X$ has vanishing first Chern 
class, and is therefore a Calabi-Yau manifold.

\subsection{Tadpole conditions}\label{tadcons}
It is natural to consider the surface $D\subset X$ obtained as 
inverse image of $\uD\subset B$. In order to analyze D7-branes
localized on $D$, we study the following general set-up.

We consider a D-brane wrapped around a cycle $D$ of a Calabi-Yau
manifold $X$.  Open strings with both end points on the D-brane define
the Chan-Paton bundle $E\rightarrow D$.
The charge of the D-brane will depend on the embedding 
$f :D\hookrightarrow X$ and the topology of the Chan-Paton bundle $E$. 
This charge can be computed by an anomaly inflow argument
(\cite{Minasian:1997mm}).  An anomaly is a violation of a symmetry of
the Lagrangian by quantum effects; the anomaly of a gauge symmetry
indicates an inconsistency of the theory and must vanish.  The anomaly
inflow mechanism \cite{Callan:1984sa,Cheung:1997az}
consists of introducing an anomalous term in the
Lagrangian to cancel the anomalous contribution of another term. The
two terms will be anomalous when considered separately, but together
they give an anomaly free theory.

Since string theory is anomaly free, self-consistency requires that
for each possible anomaly there is a contribution in the Lagrangian
that maintains the theory anomaly free.  In other words, identifying a
possible anomaly is an opportunity to discover a new sector of the
Lagrangian of the theory. 
The new terms coming from anomaly inflow have been recovered by direct 
string theory calculations 
\cite{Craps:1998fn,Craps:1998tw,Morales:1998ux,Scrucca:1999uz}.
In the case of D-brane configurations, an anomaly can be generated by
massless chiral fermions or self-dual tensor fields located on the
intersection of two branes. The cancellation of this chiral anomaly
requires the presence of an anomalous term, usually called a
 Chern-Simons term or a Wess-Zumino term.  The chiral anomaly is first
 computed using an index theorem. The Chern-Simons term is then deduced
by a descent procedure. All these steps are purely algebraic and are
by now well understood \cite{AlvarezGaume:1984dr,Scrucca:1999uz}. 
Since the Chern-Simons term is linear in the
RR potential, it gives a charge to the RR fields.

The Lagrangian for a RR field $C_{(p+1)}$ is of the type
\begin{equation*}
{\mathcal L}=-\frac{1}{4}F\wedge *F + J \cdot C_{(p+1)} ,
\end{equation*}
where $C_{(p+1)}$ is a $(p+1)$-form and $F=\mathrm{d} C_{(p+1)}$ is
its field strength and $*F$ its Hodge dual; $J$ is called the 
{\em current.\/}
The equation of motion of $C$ is
\begin{equation*}
\mathrm{d}(* F)=J.
\end{equation*}
Using Gauss's law, the charge is the integral of the current. If we
are in a compact space, the equations of motion tell us that the total
charge should vanish:
\begin{equation*}
Q_{\text{Total}}= \int J=\int \mathrm{d}(* F)=0.
\end{equation*}
This necessary condition is usually called a {\em tadpole condition.}   
See \cite{Rabadan:2002wy} for a pedagogic introduction.

\subsubsection{Tadpoles in Type IIB}\label{TadpoleB}
The Chern-Simons term for a D-brane wrapping a cycle $D$   
(admitting a $\mathrm{Spin^c}$ structure\footnote{See section 4.3 of 
(\cite{Witten:1998cd}) or \S 3 of (\cite{ABS}).}) embedded in a 
Calabi-Yau threefold $X$ ($f:D\hookrightarrow X$) with a Chan-Paton 
bundle $E$  is given by\footnote{We consider the case where the 
NS-NS two-form vanishes and its field-strength has no discrete torsion.} 
(\cite{Minasian:1997mm}):
\begin{equation*}
\int_X Q_D (f_*  E)\wedge C=\int_D   \mathrm{ch}(E')   \wedge
\sqrt{\frac{\hat{\mathrm{A}}(TD)}{\hat{\mathrm{A}} (ND)}}  \wedge f^* C ,
\quad E'=E\otimes K_D^{-\frac{1}{2} },
\end{equation*}
where $C=C_0+C_2+C_4+C_6+C_8\in H^{\rm{even}}(X)$ is the total RR
potential and $f^* C $ its pullback to $D$; $\hat{A}$ is the total
A-roof genus (\cite{Harvey:2005it}); $TD$ is the tangent bundle to $D$
and $ND$ is its normal bundle; $\mathrm{ch}(E')$ is the total Chern character of 
twisted sheaf 
{\small $E'=E\otimes
K_D^{-\frac{1}{2}}$} where $K_D$ is the canonical bundle of $D$.  The
appearance of the term {\small $K_D^{-\frac{1}{2}}$} is related to the
Freed-Witten anomaly \cite{Freed:1999vc,Katz:2002gh}. 
When the cycle wrapped by the D-brane is not a  $\mathrm{Spin}$ but a  
$\mathrm {Spin}^c$ manifold,  spinors are not section of the spin bundle 
$\mathrm{Spin}(D)$: such a bundle will suffer from a $\Zbb_2$ ambiguity. 
This ambiguity is cancelled by taking the tensor product with  {\small 
$K_D^{-\frac{1}{2}}$} since the latter admits the same ambiguity ($K_D$ is 
always odd  for a $\mathrm{Spin^c}$ manifold). This amounts to replacing
$E$ with   the twisted bundle {\small $E'=E\otimes K_D^{-\frac{1}{2}}$}. 
Spinors are then sections not of $\mathrm{Spin}(D)$ but of the well-defined 
bundle  $\mathrm{Spin}(D)\otimes E'$. 
In presence of $\Zbb_2$-torsion in $H^2(D,\Zbb)$, the canonical 
bundle {\small $K_D$} will admit more than one  square root, and therefore 
there would be many possible $\mathrm{Spin}^c$ structure on $D$
(\cite{Freed:1999vc}).  
Moreover, given a line bundle $\cM$, one can also replace {\small $K_D$} by 
{\small $K_D\otimes \cM^2$} so that $E'$ becomes  $E'\otimes \cM$.  
This reflects the freedom to choose  different Chan-Paton bundles on a 
D-branes  (\cite{Witten:1998cd}).
In particular,  when the manifold is $\mathrm{Spin}$, one can choose  
{\small $\cM^2=K_D$}  so that the charge formula depends only on  $E$. 
In this paper, when computing charges,  we will always refer to the canonical 
$\mathrm{Spin}^c$ lift  {\small $E'=E\otimes K_D^{-\frac{1}{2}}$}, even when 
$D$ is $\mathrm{Spin}$-manifold.\\ 
  
There is also a  Chern-Simons term for an orientifold plane
wrapped around a cycle $O$ embedded in $X$ as $i:O\hookrightarrow X$
(see for example \cite{Scrucca:1999uz,Blumenhagen:2006ci}):
\begin{equation*}
\int_X Q_{O}\wedge C=-2^{p-4}\int_O  
\sqrt{ \frac{ \hat{\ \mathrm{L}}(TO/4)}{ \hat{\  \mathrm{L}} (NO /4)}} 
\wedge  i^* C ,
\end{equation*}
where $\hat{\ \mathrm{L}}(S)$ is the Hirzebruch
polynomial\footnote{For any $d$, $\hat{\ \mathrm{L}}(dE)$ is defined
as $\sum_k d^{k} \hat{\ \mathrm L}_{k}(E)$ where $\hat{\ \mathrm
L}_k(E)$ is the term of degree $k$ in the expansion $\hat{\ \mathrm
L}(E)=\sum_j \hat{\ \mathrm L}_{j}(E)$.} of $S$ and $i^* C$ is the
pullback of total RR potential $C$ to $O$.

The Chern-Simons term of a D$p$-brane or an O$p$-plane involves the
total RR potential. It follows that a given $p$-brane has not only a
$p$-brane charge but induces as well lower brane charges.  The charge
induced by the Chern-Simons term defines an element of $H^*(X,\Zbb)$
called the {\em Mukai vector\/}. In the previous formulae $Q_D$ and
$Q_O$ are respectively the Mukai vector of a D-brane and and O-plane
wrapped respectively around a complex surface $D$ and $O$.  In type
IIB with spacetime filling branes, the component of the Mukai vector
of degree $n$ represent the induced D$(9-n)$ brane charge.

For a D7 with a trivial Chan-Paton bundle $E'$ and an O7 brane wrapped
respectively around a complex surface $D$ and $O$ of a Calabi-Yau
threefold $X$, we get the following Mukai vectors:
\begin{align*}
 Q_D &=[D]+
\frac{\chi(D)}{24}  \omega,
\\
Q_O &=-8[O]+\frac{\chi(O)}{6}\omega.
\end{align*}
where $\omega$ is the unit volume element of $X$. We have used $\int_S
\mathrm{c_2}(S)=\chi(S)$ and
\begin{equation*}
\hat{\mathrm{A}}(S)=1-\frac{1}{24}(c_1^2-2 c_2),\quad 
\hat{\ \mathrm{L}}(S)=1+\frac{1}{3}(c_1^2-2c_2),\quad c_i=c_i(S).
\end{equation*}

For a single D3 brane, we have 
\begin{equation*}
Q_{D3}=-\omega,
\end{equation*}
where $\omega$ is the volume density of the Calabi-Yau three-fold $X$
(with $\int_X \omega=1$) dual to a point; the conventional minus sign
ensures that the D3 tadpole can be solved by introducing D3 branes of
positive charge \cite{ACDE}.
In a $\Zbb_2$ orientifold configuration without fluxes with $N_{D3}$
D3-branes, D7-branes wrapping divisors $D_i$ with trivial Chan-Paton
  vector bundle, and O7-planes wrapping around divisors $O_j$, the {\em
tadpole condition\/} (cancellation of charges) reads:
\begin{align*}
\text{D7 tadpole }  & : \sum_i [D_i]-8\sum_j [O_j]=0,\\
\text{D3 tadpole }  & :N_{D3}=\frac{1}{2} \left(\sum_i  
\frac{\chi(D_i)}{24}+4 \sum_j \frac{\chi(O_j)}{ 24}\right),
\end{align*}
where the indices $i$ and $j$ label respectively the D7-branes and the 
O7-planes.
The factor of $\frac{1}{2}$ in the D3 tadpole takes into account the
double counting of D3 charge in the cover space of a $\Zbb_2$
Calabi-Yau orientifold. 
Note also that the conventional minus sign in the charge of a single
D3 brane ($Q_3=-\omega$) ensures that the induced D3 charge coming
from the curvature of the 7-branes is cancelled by $N_{D3}$
D3-branes and not by $N_{D3}$-anti-D3 branes.

\subsubsection{Tadpole in F-theory}\label{TadpoleF}
F-theory compactified on an elliptically fibered Calabi-Yau four-fold
$Y$ admits a D3-tadpole condition which is obtained by a sequence of
string dualities \cite{MR1421701}.  D3-branes are the only branes in
type IIB invariant under $\mathrm{SL}(2,\Zbb)$. Therefore, in contrast
to $(p,q)$ 7-branes, D3 branes in F theory are essentially the same
D3 branes seen in type IIB. In a sense, D3 branes play the same
role as M2-branes in M-theory and fundamental strings in type IIA
string theory.  When an M2-brane is wrapped around the eleventh
dimension used to relate M-theory and type IIA, it reduces to the
F-string of type IIA, while an M2 brane that does not intersect the
eleventh dimension of M-theory will give a D2-brane in type IIA.  More
precisely, the three-form of M-theory reduces to the NS-NS two-form
that couples to the F-string of type IIA while the transverse part of
 $A_{(3)}$ reduces to the RR-three form $C_{(3)}$ that
couples to the D2-brane.  Moreover, under T-duality, a D3-brane
wrapped around the T-duality circle will give a D2-brane in type IIB,
under S-duality this D2 brane corresponds to a M2 brane transverse to
the S-duality circle of M-theory. The determination of the F-theory
tadpole can be simply deduced by reading the following sequence of
dualities:
\begin{center}
\begin{tabular}{ccccc}
  & & & & \\
IIA  & $  \overset{S-duality}{\longrightarrow}$& M theory &   $\overset{U-duality}{\longrightarrow}$   & Type IIB\\
F-string & & M2-brane & & D3 brane\\
$\int_{M_{2}\times Y} B_{(2)} \wedge Y_8$
& & $\int_{M_{3}\times Y} A_{(3)} \wedge Y_8$
& & $\int_{M_{4}\times Y} C_{(4)} \wedge Y_8$\\
& &  & & \\
\end{tabular}
\end{center}
where $Y_8$ is a characteristic class for the four-fold $Y$ such that
$\int_Y Y_8=\frac{\chi(Y)}{24}$ when $Y$ is a Calabi-Yau\footnote{More
precisely, we have $Y_8=-\frac{1}{192}(c_1^4-4c_1^2 c_2 +8 c_1 c_3-8
c_4)$, where here the Chern classes $c_i$ are those of $Y$.  When $Y$
is a Calabi-Yau, $c_1=0$ and $Y_8=\frac{c_4}{24}$.}
(\cite{Becker:1996gj}) and $M_d$ represents the $d$-dimensional
spacetime.  Compactification of type IIA string theory on a Calabi-Yau
four-fold $Y$ to two dimensions leads to a tadpole term for the NS-NS
two-form $B_{(2)}$ that couples to the fundamental string
\cite{Vafa:1995fj}. This tadpole is proportional to the Euler
characteristic of $Y$.  Since the corresponding type IIA interaction
$\int B_{(2)}\wedge Y_8$ does not depend on the dilaton, it can be
lifted to M-theory using S-duality, but the NS-NS two-form should be
replaced by the three-form $A_{(3)}$ which couples to the M2 brane
 (\cite{Becker:1996gj,Duff:1995wd}). This new interaction $\int
A_{(3)}\wedge Y_8$ can be seen as a quantum correction to the
classical Chern-Simons term $\int A_{(3)}\wedge \mathrm{d}
A_{(3)}\wedge \mathrm{d} A_{(3)}$ of eleven-dimensional supergravity.
If we assume that there are no fluxes, the vanishing of the tadpole
requires the presence of $N_{M2}$ M2 branes, so that
 $N_{M2}=\frac{\chi(Y)}{24}$ (see chapter 10 of \cite{Becker:2007zj}).
Finally, using U-duality between M-theory and F-theory, there is a
similar tadpole for F-theory compactified on the Calabi-Yau four-fold
$Y$, but this time for the four-form $C_{(4)}$ which couples to the D3
brane.  The tadpole in type IIA is cancelled by the presence of NS-NS
charge, in M theory it is cancelled by the presence of M2-branes
charge while in F-theory compactified on an elliptically fibered
Calabi-Yau four-fold $Y$, the tadpole is cancelled by D3 brane
charge. If the latter is solely coming from $N_{D3}$ D3 branes, it
gives (\cite{MR1421701}):
\begin{equation*}
N_{D3}=\frac{\chi(Y)}{24}.
\end{equation*}

\subsection{Matching F-theory and type IIB tadpole conditions}\label{phytadc}
Consistency between type IIB  and the F-theory D3 tadpole implies that 
\begin{equation*}
2\chi(Y)=\sum_i \chi( D_i)+4 \sum_j \chi(O_j),
\end{equation*}
by simply equating the expressions obtained for the number of D3
branes ($N_{D3}$) required in these two theories to satisfy the
D3-tadpole condition. It is interesting to note that the two sides of 
this equality involve objects defined in different
regimes. The elliptically fibered Calabi-Yau four-fold $Y$ is
introduced to describe regimes in which the string coupling can be
strong in presence of possible $(p,q)$ 7-branes; on the other hand,
the l.h.s.~of the equality involves solely O-planes and $(1,0)$ D7 
branes which are only well-defined at weak coupling. This is well
illustrated for example in (\cite{Sen:1997gv}), where an O7-plane
is shown to correspond in strong coupling to a system of $(p,q)$ 
7-branes that coincide when the coupling becomes weak enough. The usual
monodromy of the axion-dilaton field around such an O7-plane is
reproduced as the total monodromy around the corresponding system of
$(p,q)$ 7-branes at strong coupling. It is therefore natural to
consider the previous relation in a weak coupling limit of F-theory.

In Sen's weak coupling limit for $\mathrm{E_8}$ elliptic fibrations, and
for general choices of $h$, $\eta$, $\chi$ (and hence of $f$, $g$), we
have a unique orientifold plane $O$ and a unique D7 brane $D$ in
type IIB. Arguing as above, consistency between type IIB and F-theory 
D3 tadpole would give the equality presented in the introduction:
\begin{equation*}
\tag{$\dagger$}
2\chi(Y)\overset?=\chi(D)+4 \chi(O)\quad.
\end{equation*}

However, equation ($\dagger$) should be parsed carefully. The general arguments
in \S\ref{tadcons} assume implicitly that all cycles under exam are
{\em nonsingular\/} (for example, the expression of the Chern-Simons
term assumes that the tangent bundle $TD$ exists), while this is
{\em not\/} the case for the surface $D$ supporting the D7 brane
in Sen's weak coupling limit. In the base $B$, the D7-brane is on the 
surface $\uD$ defined by the equation
$\eta^2+12h\chi=0$. In the the Calabi-Yau threefold $X$, the inverse
image $D$ of $\uD$ is defined by the equation
\begin{equation*}
\eta^2+12 x_0^2 \chi=0.
\end{equation*}
This surface is {\em singular\/} along the double curve $\eta=x_0=0$,
and has pinch points   (cf.~\cite{MR507725}, p.~617)
  at $\eta=x_0=\chi=0$.

There are in general several `natural' definitions of Euler
characteristic (or Chern class) of a singular variety, all giving the
 ordinary topological notions when applied to a nonsingular variety 
 (see for example the appendix of  \cite{ACDE}).
The Euler characteristics of the Calabi-Yau fourfold $Y$ and of the
orientifold $O$ are unambiguously defined, since these varieties are
assumed to be nonsingular; 
but it is not clear how the term
$\chi(D)$ should be interpreted in equation~($\dagger$). Turning
things around, we could consider equation ($\dagger$) as giving a
`prediction' for $\chi(D)$.  It is then natural to ask if this
prediction matches a natural definition of Euler characteristic for a
more general singular variety.

We will formulate some more concrete speculations along these lines 
in~\S\ref{spec}.

Another viewpoint on this situation is that, for more conventional
Euler characteristics, the relation ($\dagger$) should only be
satisfied modulo a contribution from the singularities of $D$, which
would vanish in the smooth case. The task amounts then to evaluating
this correction term precisely.

This is accomplished in \S\ref{math}, with the result stated in the
introduction: adopting the Euler characteristic of the {\em
normalization\/} $\oD$ of $D$ as the natural notion of Euler
characteristic for the singular surface $D$, we will find that the
correction term needed in order to recover the tadpole relation
($\dagger$) is {\em evaluated by the number of pinch-points of $D$.\/}
Note that $\oD$ is nonsingular, and that it can be identified as the
blow-up of $D$ along its singular locus.

Example~\ref{23328} shows this phenomenon at work in a concrete
instance, and the reader should have no difficulties adapting the
proof given there to analyze the general case considered in
\S\ref{E8fibs}, reaching the same conclusion.
  The physics underlying this particular example is analyzed in detail
in (\cite{ACDE}).


\section{Entr'act}\label{entract}

Now that we have set the stage, we can address one doubt that may be
lingering in the mind of the reader: is it truly necessary to invoke
the presence of singularities in order to verify the tadpole
condition? Might there not exist simpler configurations, consisting of 
D-branes supported on nonsingular surfaces, and simply
 satisfying the Euler characteristic constraints imposed by the tadpole
condition?

In general (but with one notable exception, see below)
this appears not to be the case: the relations are
 known not to hold when applied to examples where all loci
are assumed to be smooth (\cite{ACDE}).

In F-theory, the seven-branes only wrap surfaces over which the
elliptic fibration is singular.  This restricts seriously the allowed
configurations.  
For example, if we restrict ourself to  $\Zbb_2$-orientifolds, in type IIB, 
the typical D7 configuration is   
composed of an O7-plane and a D7-brane; they wrap two complex
surfaces that intersect along a curve.  The D7 tadpole condition
defines a linear relation among the homology cycles of these two
surfaces. 

In the case at hand, the O7 is supported on a smooth surface of class 
$c_1(\cL)$ in the Calabi-Yau threefold $X$; the D7 tadpole condition 
forces the D7 to be supported on a surface or  an union of surfaces of 
total class $c_1(\cL^8)$.

Now, if we assume that the D7 is supported on a single, smooth 
surface, then adjunction shows immediately that the tadpole matching 
condition of IIB and F-theory will simply {\em not\/} be satisfied.
From a type IIB  perspective,  without any input from Sen's sharp 
description of the orientifold limit of F-theory, this would have been the 
typical  configuration and would have not satisfied the tadpole matching 
condition of IIB and F-theory. However, this mismatch can be attributed 
to a naive identification of the typical configuration. 

Assuming that the D7 is wrapped on a union of general smooth
surfaces of varying classes equal to multiples of $c_1(\cL)$, one
can verify (again using adjunction) that the matching is obtained
for precisely one configuration: the O7-plane and two D7 branes wrapped 
 around surfaces of classes $c_1(\cL)$, $c_1(\cL^7)$, respectively.
 However, this configuration does not seem to be compatible with Sen's 
description of the orientifold limit of F-theory. 
 Thus, singularities in the support of the D7 brane do appear to be a necessary 
feature of the situation, at least from the point of view of Sen's limit.
As we have illustrated in Example~\ref{23328}, and as we are
going to verify in general in \S\ref{math}, it is possible to satisfy
 the tadpole conditions within Sen's description, if we take seriously
the presence of singularities.

The identity we will obtain by doing so will in fact realize the tadpole
conditions at the level of Chern classes, and in arbitrary dimension.
The configuration of two smooth surfaces mentioned above does 
 not appear to generalize in the same fashion. This is further evidence
that the
configuration cannot be produced by a geometric construction
analogous to Sen's description.


\section{Mathematics}\label{math}

\subsection{}\label{intromath}
We consider the following situation, extending the set-up of \S\ref{E8fibs}.
Let $B$ be a nonsingular compact complex algebraic variety of any dimension, 
and let $\varphi: Y \to B$ be an elliptic fibration, realized by a Weierstrass normal 
equation
\begin{equation*}
\tag{*}
y^2 x - (z^3 +f\,z x^2+g\, x^3) = 0
\end{equation*}
in a $\Pbb^2$-bundle\footnote{We use the projective bundle of 
{\em lines\/} in $\cE$.} $\phi:\Pbb(\cE) \to B$. Here (as in \S\ref{E8fibs}) $f$, 
resp.~$g$ are sections of powers $\cL^4$, resp.~$\cL^6$ of a line bundle 
$\cL$ on $B$. We can take $\cE=\cO\oplus \cL^3\oplus \cL^2$; the 
left-hand-side of (*) realizes $Y$ as the zero-scheme of a section of 
$\cO_{\Pbb(\cE)}(3)\otimes \phi^*\cL^6$ in~$\Pbb(\cE)$.

We assume that the base loci of the linear systems $|\cL^4|$, $|\cL^6|$
are disjoint, and that $f$, $g$ are general. We let $\Delta\subset B$ denote 
the discriminant hypersurface, given by
$$4\,f^3+27\,g^2 = 0\quad;$$
$\Delta$ is the zero-locus of a section of~$\cL^{12}$.
The following is observed in \cite{MR1243538},~1.5 (cf.~\cite{MR690264}, 
Proposition~2.1; and \cite{MR977771} for Weierstrass models):

\begin{lemma}\label{marku}
With notation and assumptions as above:
\begin{itemize}
\item $Y$ is nonsingular and $\varphi$ is flat;
\item for $p\not\in \Delta$, the fiber $\varphi^{-1}(p)$ is a smooth elliptic curve;
\item for $p\in \Delta$, $f(p)\ne 0$, the fiber $\varphi^{-1}(p)$ is a nodal cubic;
\item for $p\in \Delta$, $f(p)=g(p)=0$, the fiber $\varphi^{-1}(p)$ is a cuspidal
cubic.
\end{itemize}
\end{lemma}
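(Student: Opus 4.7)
My plan is to reduce everything to a direct Jacobian-criterion analysis of the Weierstrass equation (*), exploiting the fact that only two nontrivial coefficients $f,g$ appear. I will first dispose of the fiber-by-fiber claims, then handle total smoothness, with flatness appearing essentially for free.

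For the three fiber statements, I would fix $p\in B$ and study the plane cubic $C_p\subset \Pbb^2$ cut out by $y^2x=z^3+f(p)zx^2+g(p)x^3$. The unique point at infinity is $[x:y:z]=[0:1:0]$; in the chart $y=1$, the defining equation becomes $x-z^3-f(p)zx^2-g(p)x^3$, whose $x$-derivative is $1$ at the origin, so this point is smooth regardless of $p$. It then suffices to examine the affine chart $x=1$, where the equation is $y^2=z^3+f(p)z+g(p)$. The Jacobian conditions $y=0$ and $3z^2+f(p)=0$ together with the equation itself give, by the standard resultant computation, exactly the condition $4f(p)^3+27g(p)^2=0$, i.e.\ $p\in\Delta$. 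When $f(p)\neq 0$, the double root is $z_0=-3g(p)/(2f(p))\neq 0$ and the third root is distinct, so $y^2=z^3+f(p)z+g(p)$ factors locally at $(0,z_0)$ as $y^2=(z-z_0)^2\cdot u$ with $u$ a unit, yielding a node; when $f(p)=g(p)=0$ the equation reduces to $y^2=z^3$, the standard cusp. This establishes items (2)--(4).

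For flatness, I would invoke miracle flatness: $\Pbb(\cE)$ is smooth, $Y$ is a Cartier divisor in it, hence Cohen--Macaulay, and $B$ is smooth of dimension one less than $Y$; the fiber analysis above shows all fibers are $1$-dimensional plane cubics, so the map has equidimensional fibers and is therefore flat.

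The main work, and the step I would regard as the real obstacle, is global smoothness of $Y$. In the affine chart $x=1$, the defining section $F=y^2-z^3-fz-g$ has fiber-direction partials $\partial_y F=2y$, $\partial_z F=-(3z^2+f)$, and base-direction differential $-z\,df-dg$. Away from $\Delta$ the fiber partials already suffice by step one, so the issue is only at fiber-singular points. At a nodal fiber point $(0,z_0)$ with $z_0=-3g(p)/(2f(p))$, smoothness of $Y$ reduces to the non-vanishing of $z_0\,df(p)+dg(p)$ in $T^*_pB$; at a cuspidal point one needs $df(p)$ and $dg(p)$ to not both vanish. I would verify that, for generic $f\in H^0(B,\cL^4)$ and $g\in H^0(B,\cL^6)$, both conditions hold everywhere on $\Delta$ by a standard Bertini-type incidence argument: the locus in $H^0(\cL^4)\times H^0(\cL^6)\times B$ where the smoothness fails has the wrong dimension to dominate $H^0(\cL^4)\times H^0(\cL^6)$, and the disjointness hypothesis on the base loci of $|\cL^4|$ and $|\cL^6|$ is precisely what keeps the cuspidal case under control (ensuring that $f$ and $g$ can be perturbed independently near any common zero). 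Once this genericity check is in place, the Jacobian criterion gives smoothness of $Y$, completing the proof. One may alternatively cite \cite{MR1243538},~1.5 or \cite{MR690264}, Prop.~2.1, where this verification is carried out in detail.
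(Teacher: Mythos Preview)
The paper does not prove this lemma at all: it simply cites \cite{MR1243538}, 1.5, \cite{MR690264}, Proposition~2.1, and \cite{MR977771}, and moves on. Your proposal is thus not a comparison case but an actual proof where the paper offers none, and it is essentially sound. Your fiber-by-fiber analysis of the plane cubic is the standard one and is correct; your appeal to miracle flatness is clean and valid ($Y$ is a Cartier divisor in the smooth $\Pbb(\cE)$, hence Cohen--Macaulay, and all fibers are plane cubics of dimension~$1$); and your Jacobian-criterion approach to global smoothness is the right one, with the disjoint-base-loci hypothesis playing exactly the role you identify.

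One small imprecision: at a cuspidal fiber ($f(p)=g(p)=0$, singular point at $z_0=0$), the base-direction differential of $F=y^2-z^3-fz-g$ is $-z_0\,df(p)-dg(p)=-dg(p)$, so what you actually need is $dg(p)\neq 0$, not merely that $df(p)$ and $dg(p)$ are not both zero. This is still a generic condition (it says $G$ is smooth along $F\cap G$), and your Bertini/incidence argument covers it, but the condition you wrote is too weak as stated. With that correction the argument is complete, and indeed matches what is done in the references you (and the paper) cite.
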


We will denote by $F$, resp.~$G$ the hypersurfaces determined by $f$, $g$, 
and we will assume that $F$ and $G$ are nonsingular and intersect transversally.
We also assume that $\Delta$ is nonsingular away from the codimension~2
locus
$$C:\quad f=g=0\quad,$$
which is a nonsingular variety by the transversality hypothesis.
All these assumptions are satisfied (by Bertini) if e.g., $\cL$ is very ample.
\begin{center}
\includegraphics[scale=.4]{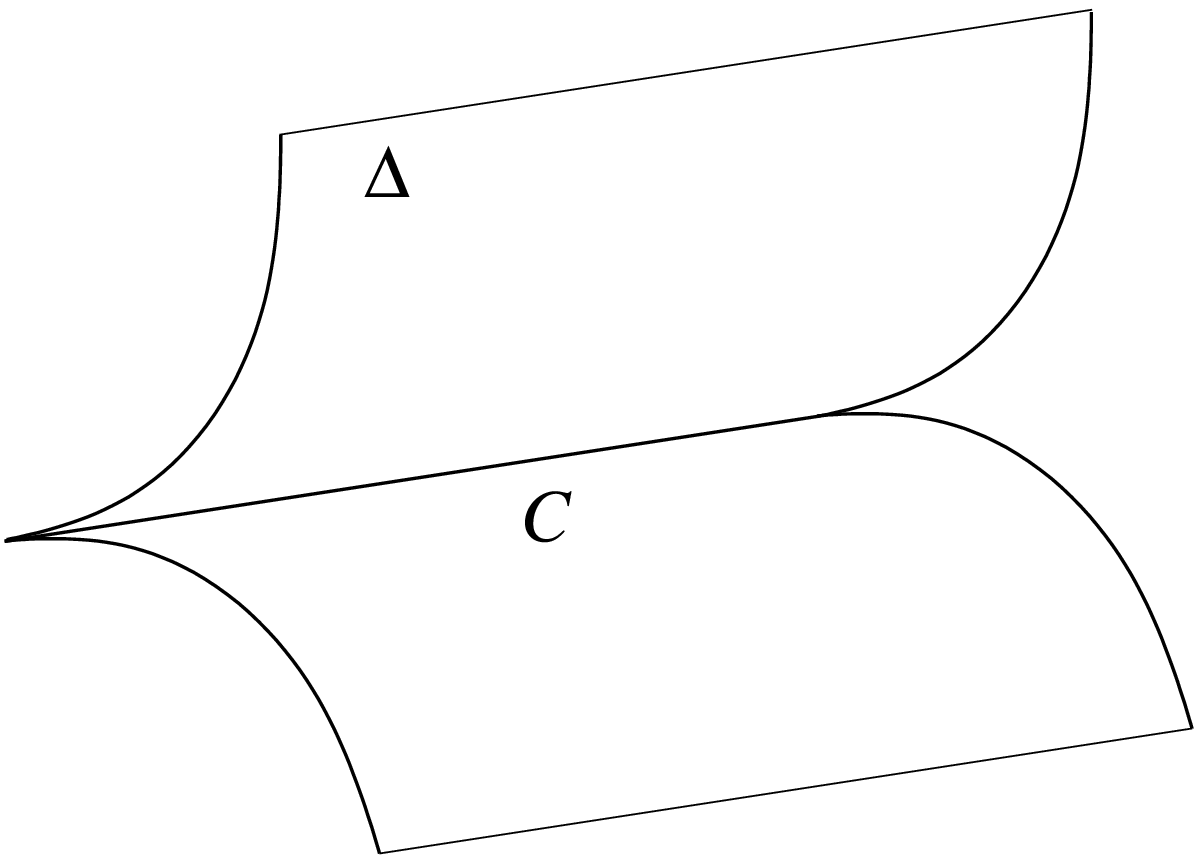}
\end{center}

As noted in \S\ref{E8fibs}, $Y$ is a Calabi-Yau variety if and only if $\cL$ is 
the anticanonical
bundle of~$B$ (\cite{MR1243538}, 1.5~(i)). However, this hypothesis will not 
be needed for the main results in this section.

\subsection{Sethi-Vafa-Witten formula}\label{SVWform}
Our first task is to extend the Sethi-Vafa-Witten formula (cf.~\S\ref{intro})
to the situation described above. We interpret this formula as a comparison
between the total Chern class of the elliptic fibration~$Y$ and the 
total Chern class of the hypersurface~$G$.

\begin{prop}\label{SVW}
Let $\varphi: Y \to B$ be an elliptic fibration, as above. Then
$$\varphi_*\, c(Y)=2\,\iota_* c(G)\quad,$$
where $\iota: G \to B$ is the embedding of $G$.
\end{prop}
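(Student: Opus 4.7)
The plan is to realize $c(Y)$ explicitly in the Chow ring of the ambient $\Pbb^2$-bundle, push it down to $B$, and reduce the claimed identity to a short rational-function computation handled by Bott's residue formula.

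First, I would view $Y$ as a divisor in $\phi:\Pbb(\cE)\to B$ of class $3H+6u$, with $H=c_1(\cO(1))$ and $u=\phi^*c_1(\cL)$. The relative Euler sequence $0\to\cO\to\cO(1)\otimes\phi^*\cE\to T_\phi\to 0$ gives $c(T_\phi)=(1+H)(1+H+2u)(1+H+3u)$; combining with the normal-bundle sequence for $j:Y\hookrightarrow\Pbb(\cE)$ and applying the projection formula to $\phi$ produces
$$
\varphi_*c(Y)\;=\;c(TB)\cdot \phi_*\!\left[\frac{(1+H)(1+H+2u)(1+H+3u)(3H+6u)}{1+3H+6u}\right].
$$
On the other side, since $g$ is a section of $\cL^6$ the divisor $G$ has class $[G]_B=6L$, so adjunction yields $\iota_*c(G)=c(TB)\cdot\frac{6L}{1+6L}$. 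After canceling the common factor $c(TB)$, the proposition reduces to the $B$-level identity
$$
\phi_*\!\left[\frac{(1+H)(1+H+2u)(1+H+3u)(3H+6u)}{1+3H+6u}\right]\;=\;\frac{12L}{1+6L}.
$$

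To verify this last identity I would apply Bott's residue formula. In the ``lines in $\cE$'' convention, the defining relation in $A^*(\Pbb(\cE))$ factors as $H(H+3u)(H+2u)=0$, so the pushforward of a rational function $F(H)$ equals the sum over $H\in\{0,-3u,-2u\}$ of the residues of $F(H)/[H(H+3u)(H+2u)]$. A pleasant cancellation occurs here: the numerator factor $3H+6u=3(H+2u)$ kills the pole at $H=-2u$, leaving only two residues. A short partial-fraction calculation then shows that their $u^{-1}$ singular parts cancel and the surviving remainder is exactly $\frac{12u}{1+6u}$.

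The main bookkeeping hazard is the sign convention of the projective-bundle relation: in the paper's lines convention the relation is $\sum_j c_j(\cE)\,H^{n-j}=0$ (all positive signs) rather than the alternating-sign version, which means the correct roots to feed into Bott's formula are $-\alpha_i$, not the Chern roots $\alpha_i$ themselves. Apart from this piece of bookkeeping, no new ideas are needed.
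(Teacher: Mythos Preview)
Your argument is correct and takes a genuinely different route from the paper's. You compute $c(Y)$ by adjunction in the ambient $\Pbb^2$-bundle and push forward via the residue description of $\phi_*$; the cancellation at $H=-2u$ and the two surviving residues do combine to $\frac{12u}{1+6u}$, as a direct check confirms. The paper never enters the ambient bundle: instead it uses the fiberwise Euler characteristics from Lemma~\ref{marku} (smooth, nodal, cuspidal cubics have $\chi=0,1,2$) to obtain the constructible-function identity $\varphi_*(\one_Y)=\one_\Delta+\one_C$, then invokes functoriality of CSM classes together with Aluffi's formula for $\csm$ of a singular hypersurface to evaluate $\csm(\Delta)$; this requires identifying the singularity subscheme of $\Delta$ (ideal $(f^2,g)$) and its Segre class, after which a short manipulation collapses everything to $2\,\iota_*c(G)$. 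Your approach is more elementary and self-contained for this isolated statement, needing only standard projective-bundle intersection theory and no CSM machinery. The paper's approach is chosen because the same constructible-function method drives the subsequent results (Lemma~\ref{SVWlimit}, Theorem~\ref{tadpole}), where the discriminant degenerates and no smooth ambient model for the relevant loci is available; it also makes the geometric role of $\Delta$ and its cuspidal locus explicit, which is the point of the section.
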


Here and in the following, we denote by $c(X)=c(TX)\cap [X]$ the
total `homology' Chern class of the tangent bundle of $X$, {\em if $X$ is a 
nonsingular variety.\/}

For the proof of Proposition~\ref{SVW}, we rely on the calculus of
constructible functions and Chern-Schwartz-MacPherson (CSM)
classes on (possibly) singular varieties. 
We recall that a {\em constructible function\/} on a (complex,
compact, possibly singular) variety~$X$
is a $\Zbb$-linear combination of characteristic functions of closed
subvarieties of~$X$. Constructible functions on $X$ form an abelian
group $F(X)$, which is covariantly functorial: if $\psi: X \to Y$ is a proper 
 morphism of varieties, then $\psi$ induces a homomorphism $\psi_*:
F(X) \to F(Y)$ uniquely defined by the requirement that if $Z\subset X$
is a closed subvariety, then
$$\psi_*(\one_Z)(p)=\chi (\psi^{-1}(p)\cap Z)\quad,$$
where $\one_Z$ is the characteristic function of $Z$, and $\chi$ denotes 
topological Euler characteristic. To each constructible function 
$\alpha\in F(X)$ we can associate a {\em CSM class\/} in the Chow group 
of $X$:
$$\csm(\alpha)\in A_*(X)\quad,$$
such that
\begin{enumerate}
\item if $X$ is nonsingular, then $\csm(\one_X)$ equals the total Chern
class of $X$:
$$\csm(\one_X)=c(X)\quad;$$
\item the assignment of CSM classes is functorial, in the sense that if
$\psi:X \to Y$ is a proper morphism then $\forall \alpha\in F(X)$
$$\psi_*(\csm(\alpha))=\csm(\psi_*(\alpha))\quad.$$
\end{enumerate}
In view of (1), one defines the {\em Chern-Schwartz-MacPherson class\/}
of a (possibly singular) variety~$X$ as
$$\csm(X):=\csm(\one_X)\in A_*(X)\quad:$$
thus, $\csm(X)=c(X)$ if $X$ is nonsingular, but $\csm(X)$ is defined for
arbitrary varieties~$X$. It easily follows from (2) that the degree of the
zero-dimensional component of $X$ is the topological Euler characteristic
$\chi(X)$ of $X$.

In these definitions, the reader may replace the Chow group with 
ordinary integral homology. This topological setting was the context of the 
original result of MacPherson (\cite{MR50:13587}) and of earlier, independent
work of Marie-H\'el\`ene Schwartz (\cite{MR35:3707, MR32:1727});
MacPherson's and Schwartz's very different definitions are known to lead
to the same class (\cite{MR83h:32011, aluffi-2006}). For a rapid review of the 
definition of $\csm$ see \cite{MR85k:14004}, \S19.1.7. An alternative 
 is given in \cite{MR2209219}, Definition~3.2.

Although the statement of Proposition~\ref{SVW} only involves 
{\em nonsingular\/} varieties, its proof is considerably streamlined by
using the notions we just recalled, which were introduced 
for the study of {\em singular\/} spaces.

\begin{proof}[Proof of Proposition~\ref{SVW}]
By Lemma~\ref{marku} and the definition of push-forward of constructible
functions recalled above, we have
$$\varphi_*(\one_Y)=\one_\Delta+\one_C\quad;$$
by the properties of CSM classes recalled above,
$$\varphi_*c(Y)=\iota_{\Delta*}\csm(\Delta)+
\iota_{C*} c(C)\quad,$$
where $\iota_\Delta$, $\iota_C$ denote the corresponding embeddings
(note that $C$ is nonsingular).

The class $\csm(\Delta)$ may be computed by using Theorem~I.4 from 
\cite{MR2001i:14009}:
\begin{equation*}
\tag{**}
\csm(\Delta)=c(TB)\cap \left(\frac{[\Delta]}{1+\Delta}+\frac 1{1+\Delta} 
\left( s(\Delta_s,B)^\vee \otimes \cO(\Delta)\right)\right)\quad,
\end{equation*}
where $\Delta_s$ denotes the {\em singularity subscheme\/} of $\Delta$ 
(defined locally by the ideal of partial derivatives of an equation for $\Delta$), 
$s(\Delta_s,B)$ is its Segre class in $B$ (cf.~\cite{MR85k:14004}, Chapter~4), 
and $\cO(\Delta)$ is the line bundle of which  $\Delta$ is a section.
By the assumptions detailed at the beginning of the section, $\Delta_s$ is
supported on $C$.
The differential of the (local) equation for $\Delta$ is
$12 f^2\, df + 54 g\, dg$;
the differentials $df$, $dg$ are linearly independent in a neighborhood 
of~$C$, and it follows that $\Delta_s$ has ideal
$$(f^2,g)\quad,$$
that is, it is the complete intersection of $G$ and the `double' $2F$ of $F$.
Since the complete intersection of $F$ and $G$ is $C$, we conclude
$$s(\Delta_s,B)=\frac{2[C]}{(1+2F)(1+G)}\quad.$$  
 Applying (**) we get\footnote{\label{calculus}Here we use the simple calculus of the
operations $\otimes$, ${}^\vee$, see \S2 in \cite{MR1316973}. The
expression
$\frac{2[C]}{(1+2F)(1+G)}$
is viewed as $c(\cA)^{-1}\cap a$, where $\cA$ is a vector bundle
with roots $2F$, $G$, and $a$ is the homology class $2[C]$. Now, for
all vector bundles $\cA$: 
$$(c(\cA)^{-1}\cap a)^\vee=c(\cA^\vee)^{-1}\cap a^\vee\quad,$$
and for all line bundles $\cM$
$$(c(\cA)^{-1}\cap a)\otimes \cM=c(\cM)^{\rk \cA} c(\cA\otimes \cM)^{-1}
\cap (a\otimes \cM)\quad.$$
The classes $a^\vee$ and  $a\otimes \cM$ are both linear in $a$. For a pure-dimensional $a$, they 
equal $(-1)^{\codim a}a$, $c(\cM)^{-\codim a}\cap a$, respectively.}:
\begin{align*}
\csm(\Delta) &=c(TB)\cap \left(\frac{[\Delta]}{1+\Delta}+\frac 1{1+\Delta}
\frac{2[C]}{(1-2F)(1-G)}\otimes \cO(\Delta)\right)\\
& =c(TB)\cap \left(\frac{[\Delta]}{1+\Delta}+\frac 1{1+\Delta} 
\frac{2[C]}{(1+\Delta-2F)(1+\Delta-G)}\right)
\end{align*}
Since $\Delta=3F=2G$ as divisor classes, and $[C]=F\cdot [G]$, this shows
$$\csm(\Delta) =c(TB)\cap \left(\frac{[\Delta]}{1+\Delta}+\frac 1{1+\Delta} 
\frac{2[C]}{(1+F)(1+G)}\right)= c(TB)\cap \frac{(2+F)\cdot [G]}{(1+F)(1+G)}\quad.$$
Using this, and again the fact that $C$ is the complete intersection of $F$ and $G$,
\begin{align*}
\varphi_*c(Y) &=c(TB)\cap \left(
\frac{(2+F)\cdot [G]}{(1+F)(1+G)} +\frac{[F] [G]}{(1+F)(1+G)}\right)\\
&=c(TB)\cap \frac{2\,[G]}{1+G}=2\,\iota_*c(G)\quad,
\end{align*}
giving the statement.
\end{proof}

Since the Euler characteristic agrees with the degree of the top Chern
class, we obtain the following statement (in terms of the class of the line 
bundle $\cL$ introduced at the beginning of the section):

\begin{corol}\label{SVWcor}
Let $Y$ be an elliptic fibration on a nonsingular compact variety $B$,
as above. Let $\ell=c_1 (\cL)$, and $c_i=c_i(TB)$, and let $b=\dim B$. 
Then 
$$\chi(Y)=12\,\ell \left(c_{b-1}-6\,\ell\, c_{b-2}+6^2\, \ell^2\, c_{b-3}
+\cdots + (-6)^{b-1}\,\ell^{b-1}\right)\quad.$$
\end{corol}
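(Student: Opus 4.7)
The plan is to deduce the corollary directly from Proposition~\ref{SVW} by taking degrees and then computing the Chern class of the smooth hypersurface~$G$ by standard adjunction. Since $\chi$ of a nonsingular variety is the degree of the zero-dimensional component of its total homology Chern class, and since push-forward preserves degree, Proposition~\ref{SVW} yields at once
\begin{equation*}
\chi(Y)=\deg\varphi_* c(Y)=2\deg\iota_*c(G)=2\chi(G)\quad.
\end{equation*}
So the whole task reduces to expressing $2\chi(G)$ as a polynomial in $\ell$ and the Chern classes $c_i=c_i(TB)$.

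Next, I would use the fact that $G$ is the smooth zero locus of a section of~$\cL^6$ (so $[G]=6\ell\cap [B]$, and the normal bundle is $\cL^6|_G$). Adjunction gives
\begin{equation*}
c(TG)=\frac{\iota^*c(TB)}{1+6\ell|_G}\quad,
\end{equation*}
hence $c(G)=c(TG)\cap[G]$, and by the projection formula
\begin{equation*}
\iota_* c(G)=\frac{c(TB)}{1+6\ell}\cap\iota_*[G]=\frac{c(TB)\cdot 6\ell}{1+6\ell}\cap[B]\quad.
\end{equation*}
This is entirely routine.

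Finally, I would expand the geometric series $\frac{6\ell}{1+6\ell}=\sum_{k\ge 0}(-1)^k 6^{k+1}\ell^{k+1}$, multiply by $c(TB)=1+c_1+\cdots+c_b$, and extract the piece of codimension~$b$ (the zero-dimensional component after capping with $[B]$):
\begin{equation*}
\deg\iota_*c(G)=\int_B \sum_{k=0}^{b-1}(-1)^k 6^{k+1}\ell^{k+1}\,c_{b-1-k}
=6\,\ell\bigl(c_{b-1}-6\,\ell\, c_{b-2}+\cdots+(-6)^{b-1}\ell^{b-1}\bigr)\quad.
\end{equation*}
Doubling this (per Proposition~\ref{SVW}) produces the stated formula.

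There is really no hard step here—the only thing to watch is bookkeeping of signs and the factor of~$6$ coming from $\cL^6$ (as opposed to, say, $\cL^4$ if one mistakenly used $F$ instead of~$G$). The substantive content is entirely contained in Proposition~\ref{SVW}; the corollary is just a cohomological unpacking of it via adjunction and the projection formula, so no additional geometric input—and in particular no Calabi-Yau hypothesis—is required.
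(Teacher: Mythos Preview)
Your proof is correct and follows essentially the same route as the paper: apply Proposition~\ref{SVW}, note that $\iota_*c(G)=c(TB)\cdot\frac{6\ell}{1+6\ell}\cap[B]$ by adjunction and the projection formula, and read off the dimension-zero term. The paper's proof is simply a terser version of what you wrote, omitting the explicit expansion of the geometric series.
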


\begin{proof}
Since $G$ has class $6\ell$,
$$\iota_*c(G) = c(TB)\,\frac{6\ell}{1+6\ell}\cap[B]\quad.$$
Applying Proposition~\ref{SVW}, and reading off the term of 
dimension~$0$, yields the statement.
\end{proof}

Proposition~\ref{SVW} and Corollary~\ref{SVWcor}~hold regardless of any 
Calabi-Yau hypothesis. 
As observed above, $Y$ is a Calabi-Yau variety if and only if $\ell=c_1$; 
in this case, Corollary~\ref{SVWcor} reproduces Proposition~2 in \S8 of 
\cite{Klemm:1996ts}. Some values for $\chi(Y)$ are given in the table below. 
{\small
\begin{table}[ht]
\renewcommand\arraystretch{1.5}
\noindent\[
\begin{array}{|c||c|c|}
\hline
\dim B & \chi(Y) & \text{for $Y$ a Calabi-Yau ($\ell=c_1$):} \\
\hline
\hline
1 & 12\,\ell & 12\,c_1 \\
\hline
2 & 12\,\ell(c_1-6\, \ell) & 12\,c_1(-5\, c_1)  \\
\hline
3 & 12\, \ell(c_2-6\,\ell c_1+6^2 \ell^2) & 12\, c_1(30\, c_1^2+c_2)  \\
\hline
4 & 12\, \ell(c_3-6\, \ell c_2+6^2\,\ell^2c_1-6^3 \ell^3) & 
12\, c_1(-180\, c_1^3-6\,c_1 c_2+c_3)  \\
\hline
5 & 12\, \ell(c_4-6\,\ell c_3+6^2\,\ell^2 c_2-6^3\, \ell^3 c_1+6^4 \ell^4)
& 12\, c_1(1080\, c_1^4+36\,c_1^2 c_2-6\, c_1 c_3+c_4) \\
\hline
\end{array}
\]
Euler characteristic of $E_8$ elliptic fibrations
\end{table}}
The third line in the table (that is, Corollary~\ref{SVWcor} for $\dim B=3$ and 
$\ell=c_1$) reproduces formula~(2.12) in~\cite{MR1421701}.
In this sense, Proposition~\ref{SVW} should be viewed as a generalization
of the Sethi-Vafa-Witten formula---holding in arbitrary dimension, for
$E_8$~elliptic fibrations that are not necessarily Calabi-Yau varieties,
and at the level of total Chern classes.

\subsection{Tadpole relation}
Our next goal is to analyze the tadpole relation of \S\ref{phytadc} in the context
of the situation presented in \S\ref{intromath}: that is, to obtain a precise
relation for the Euler characteristics, holding in arbitrary dimension, and 
bypassing the Calabi-Yau hypothesis. We first obtain a Chern class relation 
involving the discriminant determined in \S\ref{intromath} and the discriminant 
at weak coupling limit; then we interpret the result in terms akin to those
presented in \S\ref{phytadc}.

Recall that the weak coupling limit is obtained by viewing the defining 
equation~(*)
$$y^2 x - (z^3 +f\,z x^2+g\, x^3) = 0$$
as a perturbation of the degenerate fibration
$$y^2 x - (z^3 +(-3h^2)\,z x^2+(-2h^3)\, x^3) = 0\quad,$$
where $h$ is a general section of $\cL^2$. More precisely, we let
$$\left\{\aligned
f & =-3h^2 + c\,\eta \\
g & = -2h^3 + c\,h\eta +c^2\, \chi
\endaligned\right.$$
where $c$ is a scalar, and $\eta$,~resp.~$\chi$ are general sections of
$\cL^4$, resp.~$\cL^6$. For general~$c$, we are in the situation presented
in~\S\ref{intromath}; the weak coupling limit is obtained by letting
$c\to 0$. The resulting family of discriminants has flat limit
$$h^2 (\eta^2+12h\chi)$$
for $c=0$.
The corresponding hypersurface $\uDelta$ is the union of a (double) nonsingular 
component~$O$ with equation $h=0$, and the singular hypersurface~$\uD$
given by $\eta^2+12 h \chi=0$. Under our standing generality assumptions,
the only singularities of $\uD$ are along the (transversal) intersection
$h=\eta=\chi=0$, a nonsingular subvariety~$\uS$ of codimension~$3$ in $B$. 
In fact, as $d(\eta^2+12 h \chi)=2\eta \,d\eta+12h \,d\chi+12 \chi \,dh$
and $d\eta,d\chi,dh$ are independent along $S$, it follows that the
singularity subscheme $\uD_s$ of $\uD$ coincides with~$\uS$.

We now consider the problem of expressing $\varphi_*(c(Y))$ in terms of 
this limiting discriminant, analogously to Proposition~\ref{SVW}.
The following should be viewed as a `limiting Sethi-Vafa-Witten formula':

\begin{lemma}\label{SVWlimit}
With notation as above,
$$\varphi_*c(Y)=\csm(\one_\uD+2\,\one_O-\one_\uS)\quad.$$
\end{lemma}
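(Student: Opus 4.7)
The plan is to reduce the statement to Proposition~\ref{SVW} and then verify a rational-function identity by direct Chern-class computation. Proposition~\ref{SVW} gives $\varphi_*c(Y)=2\iota_*c(G)$, and since $G$ is a nonsingular hypersurface of class $6\ell$, adjunction rewrites this as $\frac{12\ell}{1+6\ell}c(TB)\cap[B]$. Thus it suffices to show that $\csm(\one_{\uD})+2\csm(\one_O)-\csm(\one_\uS)$ equals this same expression.

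The contributions from $O$ and $\uS$ are immediate, because both loci are nonsingular. The surface $O=\{h=0\}$ is a smooth hypersurface of class $2\ell$, and under the standing genericity hypotheses the subvariety $\uS=\{h=\eta=\chi=0\}$ is a transversal complete intersection of codimension three with $[\uS]=48\ell^3$. Adjunction then yields
\begin{align*}
\iota_{O*}c(O)&=\frac{2\ell}{1+2\ell}\,c(TB)\cap[B],\\
\iota_{\uS*}c(\uS)&=\frac{48\ell^3}{(1+2\ell)(1+4\ell)(1+6\ell)}\,c(TB)\cap[B].
\end{align*}

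The core step is to compute $\csm(\uD)$ for the singular hypersurface $\uD$ of class $8\ell$. The remark just before the lemma identifies $\uD_s=\uS$, so formula~(**) applies and gives
$$\csm(\uD)=c(TB)\cap\left(\frac{8\ell}{1+8\ell}+\frac{1}{1+8\ell}\bigl(s(\uS,B)^\vee\otimes\cO(\uD)\bigr)\right)\cap[B].$$
Writing $s(\uS,B)=48\ell^3/[(1+2\ell)(1+4\ell)(1+6\ell)]$ and applying the $\otimes/{}^\vee$ calculus of footnote~\ref{calculus}, the sign $(-1)^3=-1$ from dualizing a codimension-$3$ class combines with the fact that tensoring by $\cO(8\ell)$ shifts the Chern roots $-2\ell,-4\ell,-6\ell$ of the dual conormal back to $6\ell,4\ell,2\ell$; the twisted dual simplifies neatly to
$$s(\uS,B)^\vee\otimes\cO(\uD)=\frac{-48\ell^3}{(1+2\ell)(1+4\ell)(1+6\ell)}.$$

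The main remaining obstacle is then purely algebraic: verify
$$\frac{8\ell}{1+8\ell}-\frac{48\ell^3}{(1+8\ell)(1+2\ell)(1+4\ell)(1+6\ell)}+\frac{4\ell}{1+2\ell}-\frac{48\ell^3}{(1+2\ell)(1+4\ell)(1+6\ell)}=\frac{12\ell}{1+6\ell}.$$
After clearing the common denominator $(1+2\ell)(1+4\ell)(1+6\ell)(1+8\ell)$ and dividing by $4\ell$, both sides collapse to the polynomial $3+42\ell+168\ell^2+192\ell^3$, so the identity follows in arbitrary dimension and without any Calabi-Yau hypothesis. I note that the proof does not invoke the normalization $\pi\colon\oD\to\uD$ at all; the identity $\pi_*(\one_\oD)=2\,\one_\uD-\one_\uS$ of the introduction is what will subsequently allow the lemma to be rewritten as the tadpole formula (\ddag).
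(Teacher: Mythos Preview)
Your proof is correct and follows essentially the same approach as the paper: both reduce to Proposition~\ref{SVW}, compute $\csm(O)$ and $\csm(\uS)$ by adjunction, and evaluate $\csm(\uD)$ via formula~(**) together with the $\otimes/{}^\vee$ calculus of footnote~\ref{calculus}. The only difference is cosmetic: the paper carries out the final simplification symbolically using the divisor-class relations $[\uD]=2[F]=[O]+[G]$, $[G]=[O]+[F]$, $[\uS]=[O][F][G]$, whereas you specialize everything to multiples of $\ell$ and verify the resulting polynomial identity directly.
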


\begin{proof}
Since $O$ and $\uS$ are nonsingular, 
$$\csm(O)=c(TB)\cap \frac{[O]}{1+O}\quad,\quad
\csm(\uS)=c(TB)\cap \frac{[\uS]}{(1+O)(1+F)(1+G)}$$
by the normalization property (1) of CSM classes, and noting that 
$\eta$, resp.~$\chi$ are sections of $\cL^4\cong\cO(F)$, resp.~$\cL^6
\cong \cO(G)$.

The class $\csm(\uD)$ is evaluated by again applying~Theorem~I.4 from 
\cite{MR2001i:14009}:
$$\csm(\uD)=c(TB)\cap \left(\frac{[\uD]}{1+\uD}+\frac 1{1+\uD} 
\left( s(\uD_s,B)^\vee \otimes \cO(\uD)\right)\right)\quad.$$
We have already observed that this scheme is in fact $\uS$, hence
$$s(\uD_s,B)=s(\uS,B)=\frac{[\uS]}{(1+O)(1+F)(1+G)}\quad.$$
Applying the `calculus' recalled in footnote~\ref{calculus}:
$$\csm(\uD)=c(TB)\cap \left(\frac{[\uD]}{1+\uD}-\frac 1{1+\uD} 
\frac{[\uS]}{(1+O)(1+F)(1+G)}
\right)\quad.$$
Using that $[\uD]=2[F]=[O]+[G]$, $[G]=[O]+[F]$,
and $[\uS]=[O][F][G]$, it follows that
$$\csm(\uD)+2\csm(O)-\csm(\uS)=c(TB)\cap \frac{2[G]}{1+G} = 
2\, \iota_* c(G)\quad.$$
The statement follows then immediately from Proposition~\ref{SVW}.
\end{proof}

\begin{remark}[Verdier specialization]\label{Verdier}
According to Lemma~\ref{SVWlimit}, the two constructible functions
$$\one_\Delta+\one_C
\qquad\text{and}\qquad
\one_\uD+2\one_O-\one_\uS$$
have the same CSM class in $A_*(B)$. It is natural to ask for `systematic'
ways to produce such identities. One possibility consists of studying
the {\em Verdier specialization\/} $\sigma_X$ of the constant function~$1$ 
from the total space of a smoothing family for a hypersurface~$X$. For an 
efficient summary of this notion, introduced in~\cite{MR629126}, we 
recommend \S5 of \cite{MR2002g:14005}; the function $\sigma_X$ is 
essentially defined by taking Euler characteristics of nearby fibers. In the 
case at hand, the reader can verify that
$$\sigma_{\Delta}=\one_\Delta-2\,\one_C +2\,\one_{C'}$$
and
$$\sigma_{\uDelta} = \one_{\uD}+2\one_O-6\one_Q+3\one_\uS-\one_{O'}+
5\one_{Q'}-3\one_{\uS'}\quad,$$
where $Q=O\cap \uD$, and primed letters 
denote the intersection of the corresponding locus with a general
element of the linear system of the hypersurface 
(use~\cite{MR2002g:14005}, Proposition 5.1). 
Now, it is easy to
see that the CSM class of the specialization function $\sigma_X$
only depends on the divisor class of $X$ (in fact, it reproduces the
Chern class of the virtual tangent bundle of~$X$);
 hence
$$\csm(\sigma_{\Delta})=\csm(\sigma_{\uDelta})\quad.$$
Further, one can verify directly that
$$\csm(6\one_Q-3\one_C+\one_{O'}-4\one_\uS-5\one_{Q'}+2\one_{C'}
+3\one_{\uS'})=0\quad:$$
this is straightforward since every locus appearing on the l.h.s.~is  
a nonsingular complete intersection. Combining these identities yields precisely that
$$\csm(\one_\Delta+\one_C)=\csm(\one_\uD+2\one_O-\one_\uS)\quad.$$
This gives an alternative argument for Lemma~\ref{SVWlimit}, bypassing
the use of~\cite{MR2001i:14009} and shedding some light on the
reason why such an identity should hold in the first place.
\end{remark}

Next, we consider (as in \S\ref{E8fibs})
the double cover $\rho: X \to B$ ramified along the smooth hypersurface~$O$.
Note that $X$ has vanishing canonical class
when $[O]=2\,c_1(TB)\cap [B]$; it follows that $X$ is a Calabi-Yau 
if $Y$ is a Calabi-Yau. 
However, again we point out that this hypothesis is not necessary
for the considerations in this section.

We denote by $D$ the inverse image of $\uD$ in $X$. The local 
analysis summarized in \S\ref{phytadc}
goes through unchanged in the situation considered in this section.
Explicitly: $X$ may be realized (as in \S\ref{phytadc}) by putting
$h=x_0^2$; we may use $\eta, \chi, x_0, x_1,\dots, x_r$ as local
coordinates in $X$, and $D$ is then defined (locally) by $\eta^2
+12 x_0^2\chi=0$. This hypersurface is singular along $\eta=x_0=0$,
corresponding to the inverse image of $Q=O\cap \uD$, and `pinched' 
along $\eta=\chi=x_0=0$, that is, the inverse image $S=\rho^{-1}(\uS)$.
Note that $\rho$ restricts to an isomorphism $S \to \uS$. We consider
the resolution $\oD$ of $D$ obtained by blowing up $\eta=x_0=0$:
locally, the blow-up of $X$ is covered by two charts, and we may choose
local coordinates in one of these charts as follows:
$$\tilde\eta, \chi, \tilde x_0, x_1,\dots,x_r$$
so that the blow-up map is given by
$$(\tilde\eta, \chi, \tilde x_0, x_1,\dots,x_r) \mapsto 
(\tilde \eta\tilde x_0, \chi, \tilde x_0,x_1,\dots,x_r)\quad.$$
In these coordinates, the inverse image of $D$ is given by the equation
$\tilde x_0^2(\tilde \eta^2 +12 \chi)=0$;
therefore the blow-up $\oD$ of $D$ has equation
$$\tilde \eta^2 +12 \chi=0\quad,$$
and in particular it is nonsingular in this chart. The situation on the
other local chart of the blow-up can be analyzed similarly, with the
same conclusion: blowing up the singular locus of $D$ resolves
its singularities\footnote{If $D$ is a surface, as in \S\ref{phytadc},
this is of course the standard resolution of the Whitney umbrella
obtained by blowing up along the singular curve.}.

Note that the map $\oD\to D\to \uD$ is generically 2-to-1, and
1-to-1 precisely over~$\uS$. We denote by 
$$\pi: \oD \to B$$
the composition $\oD\twoheadrightarrow D\twoheadrightarrow
\uD\hookrightarrow B$.

\begin{theorem}\label{tadpole}
With notation as above,
$$2\, \varphi_* c(Y)=\pi_* c(\oD)+4\, c(O)-\rho_* c(S)$$
in $A_*(B)$.
\end{theorem}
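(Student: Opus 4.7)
The plan is to reduce the theorem to Lemma \ref{SVWlimit} by encoding the geometry of the resolution $\pi:\oD\to B$ as an identity of constructible functions on $B$, and then invoking the functoriality of Chern-Schwartz-MacPherson classes.

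The first step is to determine the push-forward $\pi_*(\one_\oD) \in F(B)$ by inspecting the fibers of $\pi$. From the local blow-up charts already set up just before the theorem statement, the composition $\oD \twoheadrightarrow D \twoheadrightarrow \uD$ is $2$-to-$1$ over a generic point of $\uD$ (the two sheets of $\rho: D \to \uD$); it remains $2$-to-$1$ over $Q\setminus \uS$ (where $\rho$ has a unique preimage on the singular curve of $D$, but the blow-up $\oD \to D$ doubles it because $\chi \ne 0$); and it collapses to $1$-to-$1$ precisely over $\uS$ (where $\rho$ has a unique preimage at a pinch point, and the blow-up is $1$-to-$1$ there because $\chi=0$). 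In the notation of $F(B)$, this is the clean identity
$$\pi_*(\one_\oD) \;=\; 2\,\one_\uD - \one_\uS\quad.$$

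In the second step, I would apply property~(2) of $\csm$ from \S\ref{math}. Since $\oD$ is nonsingular, $c(\oD)=\csm(\one_\oD)$, so
$$\pi_* c(\oD) \;=\; \csm(\pi_* \one_\oD) \;=\; 2\,\csm(\one_\uD) - \csm(\one_\uS)\quad.$$
Likewise, $O$ is nonsingular so $c(O)=\csm(\one_O)$, and the restriction $\rho|_S : S\to \uS$ is an isomorphism between nonsingular varieties, giving $\rho_*\one_S=\one_\uS$ and hence $\rho_* c(S) = \csm(\one_\uS)$. Substituting these three expressions into the right-hand side of the theorem collapses it to
$$2\,\csm(\one_\uD) + 4\,\csm(\one_O) - 2\,\csm(\one_\uS) \;=\; 2\,\csm(\one_\uD + 2\,\one_O - \one_\uS)\quad,$$
and Lemma \ref{SVWlimit} identifies this with $2\,\varphi_* c(Y)$.

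The substance of the argument is concentrated in the first step: once the fiber count $\pi_*(\one_\oD) = 2\,\one_\uD - \one_\uS$ is in hand, the CSM calculus handles the remainder mechanically. I do not expect a real obstacle, since the necessary local analysis of the blow-up is already recorded. The only delicate point is to verify that the doubling introduced by $\oD\to D$ along the singular curve of $D$ exactly compensates the loss of a sheet in $\rho: D\to \uD$ over $Q$, and that both effects degenerate simultaneously over $\uS$ so as to leave a single sheet there — but this is precisely what the explicit coordinate description of the blow-up shows.
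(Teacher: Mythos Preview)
Your proposal is correct and follows essentially the same route as the paper's proof: establish the constructible function identity $\pi_*(\one_{\oD})=2\,\one_{\uD}-\one_{\uS}$ from the fiber description of $\pi$, then apply the functoriality of CSM classes together with Lemma~\ref{SVWlimit}. Your more detailed case analysis over $Q\setminus\uS$ versus $\uS$ simply unpacks what the paper compresses into the single observation that $\oD\to\uD$ is $2$-to-$1$ off $\uS$ and $1$-to-$1$ over~$\uS$.
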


\begin{proof}
The map $\pi$ is $2$-to-$1$ onto the complement of $\uS$ in $\uD$, and
$1$-to-$1$ over the singular locus $\uS$ of $\uD$. Therefore, by
definition of push-forward of constructible functions,
$$\pi_*(\one_{\oD})=2\,\one_\uD - \one_\uS\quad.$$
It follows that 
$$\pi_*(\one_{\oD})+4\,\one_O-\rho_*(\one_S) 
= 2\,\one_\uD+4\,\one_O-2\,\one_\uS\quad,$$
and hence, by Lemma~\ref{SVWlimit},
$$\csm(\pi_*(\one_{\oD})+4\,\one_O-\rho_*\one_S )
=2\,\csm(\one_\uD+2\,\one_O-\one_\uS)=2\,\varphi_*c(Y)\quad.$$
The formula given in the statement follows from this by the properties
(1), (2) of CSM classes recalled in \S\ref{SVWform}.
\end{proof}

Considering only the term of dimension~$0$ in Theorem~\ref{tadpole}
gives

\begin{corol}\label{tadcor}
$$2\,\chi(Y)=\chi(\oD)+4\,\chi(O)-\chi(S)\quad.$$
\end{corol}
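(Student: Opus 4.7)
The plan is to deduce Corollary~\ref{tadcor} directly from the Chern class identity in Theorem~\ref{tadpole} by passing to the dimension-zero components of both sides in $A_*(B)$. The strategy relies on two standard facts. First, for any proper morphism $f : X \to Y$ of complete varieties, pushforward preserves degrees of zero-cycles, so the degree of the dimension-zero component of each of $\varphi_* c(Y)$, $\pi_* c(\oD)$, and $\rho_* c(S)$ equals the degree of the dimension-zero component of $c(Y)$, $c(\oD)$, and $c(S)$ respectively. Second, by Gauss--Bonnet--Chern, for any nonsingular compact complex variety $X$ the degree of the dimension-zero component of $c(X)=c(TX)\cap [X]$ is the topological Euler characteristic $\chi(X)$.

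To apply the second fact to each term I need $Y$, $\oD$, $O$, and $S$ to be nonsingular. All four have been verified to be smooth in the setup: $Y$ is nonsingular by Lemma~\ref{marku}; $O$ is nonsingular by construction, as it is the divisor cut out by a general section $h$ of $\cL^2$; $\uS$ is a nonsingular transversal complete intersection $h=\eta=\chi=0$, and $S=\rho^{-1}(\uS)$ is isomorphic to $\uS$ via $\rho$, hence also nonsingular; finally, $\oD$ is nonsingular by the explicit local blow-up calculation preceding Theorem~\ref{tadpole}, which shows that in one chart it has local equation $\tilde\eta^2+12\chi=0$, with the symmetric situation in the other chart.

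Combining these two observations, reading off the dimension-zero component on both sides of the identity $2\,\varphi_* c(Y)=\pi_* c(\oD)+4\,c(O)-\rho_* c(S)$ and taking degrees yields exactly $2\,\chi(Y)=\chi(\oD)+4\,\chi(O)-\chi(S)$, as claimed. There is no substantive obstacle here: the corollary is a one-line formal consequence of the sharper class-level identity of Theorem~\ref{tadpole}, and the only point requiring care is the verification that the four intervening varieties are smooth so that Gauss--Bonnet applies without correction terms—which has already been done in the construction. The genuine mathematical content lies entirely in Theorem~\ref{tadpole} itself, namely in the constructible-function identity $\pi_*(\one_{\oD})=2\,\one_{\uD}-\one_{\uS}$ combined with Lemma~\ref{SVWlimit}; the passage from classes to Euler characteristics is purely a matter of bookkeeping.
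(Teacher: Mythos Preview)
Your proposal is correct and follows exactly the paper's approach: the corollary is obtained by reading off the dimension-zero component of the Chern class identity in Theorem~\ref{tadpole}. Your additional remarks on smoothness and on pushforward preserving degrees simply make explicit what the paper leaves implicit in its one-line derivation.
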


If $\dim B=3$, so that $S$ consists of a discrete set of points, then
$\chi(S)$ simply equals the number of pinch-points of the surface~$D$.
This relation is the corrected version of the tadpole relation ($\dagger$)
promised in \S\ref{phytadc}. It is generalized to arbitrary dimension, and 
emancipated from the Calabi-Yau hypothesis.


\section{Speculation}\label{spec}

The Euler characteristic $\chi(\oD)$ appearing in Corollary~\ref{tadcor} 
could be interpreted as the {\em stringy Euler characteristic\/} of $D$, 
although $D$ is not normal: the blow-up $\oD\to D$ is `crepant' in 
the sense that its differential is regular in codimension~$1$. 
By the same token, the push forward of $c(\oD)$ should
be interpreted as the {\em stringy Chern class\/} of the singular 
hypersurface $D\subset B$, cf.~\cite{MR2183846} and 
\cite{MR2304329}.

In fact, the machinery of~\cite{MR2183846} produces other `natural'
notions of Chern class (and, in particular, of Euler characteristic) for 
singular varieties, depending on how the relative canonical divisor of a 
resolution is handled. For a review of these notions, the reader
is addressed to \cite{MR2280127}. Briefly: if $\nu: \overline Z\to Z$ is a 
resolution of a singular variety~$Z$, the `celestial integral'
$$\integ[{{\overline Z}}]{K_\nu}{\overline\cZ}$$
determines a class in $(A_*Z)_\Qbb$, which may be taken as defining a
`total Chern class' for~$Z$. Here, $K_\nu$ denotes the chosen notion
of relative canonical divisor of $\nu$, which in turns depends on how
$\omega_Z$ is defined:
\begin{itemize}
\item Taking $\wedge^{\dim Z} \Omega^1_Z$ leads to the
{\em arc Chern class\footnote{This is called `$\Omega$ flavor' in
\cite{MR2183846} and \cite{MR2280127}, as opposed to the (stringy)
`$\omega$ flavor'.}} of $Z$, $c_{\text{arc}}(Z)$;
\item Taking the double-dual of $\wedge^{\dim Z} \Omega^1_Z$ leads to the
{\em stringy Chern class\/} $c_{\text{str}}(Z)$. This notion was
independently defined and studied in \cite{MR2304329}.
\end{itemize}
The degree of the Chern class recovers the corresponding
notion of Euler characteristic. If $Z$ is nonsingular to begin with, all
these notions coincide and simply reproduce the usual Chern class
and (topological) Euler characteristic of $Z$.

For the situation considered in \S\ref{math}, we have the resolution
$$\nu:\oD \to D$$
obtained by blowing up a codimension~1 locus in $D$.
A computation in local coordinates shows that
$\Omega^{\dim D}_{\oD|D}$ is supported on the inverse image 
$\oS=\nu^{-1}(S)$ of the pinch locus of $D$; this is a codimension~$2$
locus, and it follows that the relative canonical divisor in the `stringy'
sense vanishes; therefore, $c_{\text{str}}(D)$ is evaluated by
(the `identity manifestation' of)
$$\integ[{{\oD}}]{0}{\overline \cD}\quad.$$

The resolution~$\oD$ is not adequate to compute $c_{\text{arc}}(D)$,
since $\Omega^{\dim D}_{\oD|D}$ is not invertible (that is,
$\oD$ does not `resolve' the data in the sense of \cite{MR2183846},
\S3.3). In order to obtain a resolution satisfying this condition, we 
blow up $\oD$ further along $\oS$:
let $\hD$ be this blow-up, and denote by $E$ the exceptional divisor;
and let $\hat\nu: \hD \to D$ be the composition of the two blow-ups. 
Then the stringy relative canonical divisor is $E$, while 
$\Omega^{\dim D}_{\hD|D}\cong \cO(2E)$. 

This prompts
us to propose the following (speculative) definition: for every 
nonnegative integer $m$, we can let $c^{(m)}(D)$ be the class in 
$(A_*D)_\Qbb$ corresponding to the celestial integral
$$\integ[{{\hD}}]{mE}{\hat \cD}\quad,$$
so that $c^{(1)}(D)= c_{\text{str}}(D)$ and $c^{(2)}(D)= c_{\text{arc}}(D)$.
As we will see in a moment, the class $c^{(m)}(D)$ has a well-defined `limit as 
$m\to \infty$', which we will denote $c^{(\infty)}(D)$; the corresponding
degrees will be denoted $\chi^{(m)}(D)$, $\chi^{(\infty)}(D)$.

\begin{theorem}\label{MF}
With notation as in above and as in \S3:
$$2\,\varphi_* c(Y)=c^{(\infty)}(D)+4\, c(O)\quad.$$
\end{theorem}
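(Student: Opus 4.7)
The plan is to combine Theorem~\ref{tadpole} with the statement to reduce the task to verifying
\begin{equation*}
c^{(\infty)}(D) = \pi_*c(\oD) - \rho_*c(S) \qquad (\star)
\end{equation*}
in $(A_*B)_\Qbb$. From there, I would evaluate $c^{(m)}(D)$ directly from the celestial integral on $\hD$, identify its $m\to\infty$ limit, and push the result forward.

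Since $E$ is a single smooth divisor on the smooth variety $\hD$, the SNC formula from \cite{MR2183846} (cf.~\cite{MR2280127}) evaluates the celestial integral in closed form: summing the contributions of the two strata $\hD\setminus E$ and $E$ yields
\begin{equation*}
c^{(m)}(D) = \hat\nu_*\!\left[c(\hD) - \frac{m}{m+1}\,c(E)\right].
\end{equation*}
As a sanity check at $m=1$ this recovers the stringy class $\hat\nu_*[c(\hD)-\tfrac{1}{2}c(E)] = \nu_*c(\oD)$, matching the calculation in the paragraph preceding the theorem. Taking $m\to\infty$ gives
\begin{equation*}
c^{(\infty)}(D) = \hat\nu_*\bigl(c(\hD) - c(E)\bigr) = \hat\nu_*\,\csm(\one_{\hD} - \one_E).
\end{equation*}

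To evaluate this push-forward, factor $\hat\nu = \nu\circ\sigma$, where $\sigma:\hD\to\oD$ is the blow-up along the codimension-two smooth center $\oS$, with exceptional divisor $E$ a $\Pbb^1$-bundle over $\oS$. At the level of constructible functions,
\begin{equation*}
\sigma_*\one_{\hD}=\one_{\oD}+\one_{\oS},\qquad \sigma_*\one_E = 2\,\one_{\oS},
\end{equation*}
so $\sigma_*(\one_{\hD}-\one_E) = \one_{\oD} - \one_{\oS}$. Since $\nu$ restricts to an isomorphism $\oS\cong S$, functoriality of CSM classes combined with the factorization $\pi=\rho\circ\nu$ identifies the push-forward with the right-hand side of~$(\star)$.

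The main obstacle is justifying the explicit formula for $c^{(m)}(D)$ stated above and, more subtly, its extension from nonnegative integer $m$ to the $m\to\infty$ limit: this requires tracing through the SNC expression of \cite{MR2183846} in the special case of a single smooth exceptional divisor, and confirming that the dependence on $m$ is indeed the linear-rational one asserted (rather than holding only after taking degrees). Once this is in hand, the remainder is the elementary blow-up manipulation with constructible functions sketched above.
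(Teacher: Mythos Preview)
Your proposal is correct and follows essentially the same route as the paper: the paper proves the theorem by combining Theorem~\ref{tadpole} with Lemma~\ref{specula}, which establishes precisely the formula $c^{(m)}(D)=\pi_* c(\oD)-c(S)+\frac{2}{1+m}c(S)$ via the same celestial-integral computation and the same constructible-function push-forwards you outline (the paper writes $\hat\nu_*$ directly rather than factoring through $\sigma$, but this is cosmetic). The ``obstacle'' you flag is not a genuine one: the identity $\frac{c(T\hD)}{1+E}\bigl(1+\frac{E}{1+m}\bigr)\cap[\hD]=c(\hD)-\frac{m}{1+m}c(E)$ is an elementary adjunction manipulation that the paper simply writes down in the proof of Lemma~\ref{specula}, and the rational dependence on $m$ is manifest at the level of total classes.
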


This statement is proved by computing $c^{(m)}$ explicitly; the following
lemma does this, and gives a meaning to the limit of the class $c^{(m)}$
as $m\to \infty$.

\begin{lemma}\label{specula}
With notation as above,
$$c^{(m)}(D)=\pi_* c(\oD) - c(S) + \frac 2{1+m} c(S)\quad.$$
\end{lemma}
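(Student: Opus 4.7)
The plan is to evaluate the celestial integral appearing in the definition of $c^{(m)}(D)$ directly on $\hD$, and then to push the resulting class down through the two blow-ups $\mu:\hD\to \oD$ and $\nu:\oD\to D$ that together constitute $\hat\nu$, followed by the inclusion $D\hookrightarrow X\xrightarrow{\rho} B$. The only delicate ingredient is the evaluation of the celestial integral itself; every remaining step reduces to routine constructible-function bookkeeping.

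The first step is the evaluation. Because the divisor $mE$ is supported on a single nonsingular exceptional component, the normal-crossings formula for celestial integration (\cite{MR2183846}; reviewed in \cite{MR2280127}) collapses to the inclusion-exclusion identity
\begin{equation*}
\integ[{{\hD}}]{mE}{\hat\cD} \;=\; \csm(\one_{\hD\smallsetminus E}) + \frac{1}{m+1}\,\csm(\one_E) \;=\; c(\hD)-\frac{m}{m+1}\,c(E)\quad,
\end{equation*}
in $A_*(\hD)_\Qbb$, the second equality using the normalization of $\csm$ on the smooth loci $\hD$ and $E$.

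The second step is the push-forward by $\mu:\hD\to\oD$, which is the blow-up of $\oD$ along the nonsingular codimension-$2$ locus $\oS$. Every fiber of $\mu$ over a point of $\oS$ is a $\Pbb^1$ while every other fiber is a point, so at the level of constructible functions
\begin{equation*}
\mu_*(\one_{\hD})=\one_{\oD}+\one_{\oS}\quad,\qquad \mu_*(\one_E)=2\,\one_{\oS}\quad.
\end{equation*}
Functoriality of CSM classes (property (2) of \S\ref{math}) then immediately produces
\begin{equation*}
\mu_*\!\left(c(\hD)-\frac{m}{m+1}\,c(E)\right)\;=\;c(\oD)+c(\oS)-\frac{2m}{m+1}\,c(\oS)\;=\;c(\oD)+\frac{1-m}{m+1}\,c(\oS)
\end{equation*}
in $A_*(\oD)_\Qbb$.

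For the last step, push the preceding class forward along $\oD\to D\hookrightarrow X\xrightarrow{\rho} B$; the composition with $\mu$ is $\hat\nu$ followed by $\rho$, and the composition without $\mu$ is exactly the map $\pi$ of \S\ref{math}. The local coordinate analysis already recorded in~\S\ref{intromath} shows that $\nu$ restricts to an isomorphism $\oS\xrightarrow{\sim}S$ and $\rho$ restricts to an isomorphism $S\xrightarrow{\sim}\uS$; hence $\pi_*c(\oS)=c(S)$ in $A_*(B)_\Qbb$. Combining the two preceding displays and rewriting $\tfrac{1-m}{m+1}=-1+\tfrac{2}{m+1}$ gives precisely
\begin{equation*}
c^{(m)}(D)\;=\;\pi_* c(\oD)-c(S)+\frac{2}{1+m}\,c(S)\quad,
\end{equation*}
as claimed. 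The main obstacle is conceptual rather than computational: one needs confidence that the celestial integration formalism of \cite{MR2183846} applies to the pair $(\hD,mE)$ in the explicit form used above. Once one notes that $mE$ has SNC support trivially (being supported on one smooth divisor) and reads off the specialisation to a single exceptional component, everything else---the $\Pbb^1$-fiber computation for the blow-up $\mu$, and the identification of $\oS$ with $S$ through $\nu$---is the same elementary Euler-characteristic bookkeeping already employed in the proof of Theorem~\ref{tadpole}.
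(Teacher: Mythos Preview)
Your argument is correct and follows essentially the same route as the paper's proof: evaluate the celestial integral on $\hD$ using the single-divisor normal-crossings formula, obtaining $\csm\bigl(\one_{\hD}-\tfrac{m}{m+1}\one_E\bigr)$, and then push this constructible function down using the $\Pbb^1$-bundle structure of $E$ over $\oS\cong S$. The only cosmetic difference is that you factor the push-forward explicitly through $\mu:\hD\to\oD$ before continuing to $B$, whereas the paper pushes forward by $\hat\nu$ in one step; the arithmetic is identical.
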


\begin{proof}
By definition of the integral (see \cite{MR2280127}, \S7) we get{\small
\begin{align*}
c^{(m)}(D) &:= \hat\nu_*\left(
\frac{c(T\hD)}{(1+E)}
\left( 1+\frac{E}{1+m}\right)
\cap [\hD]\right)
=\hat\nu_*\left(c(T\hD)\cap [\hD]-\frac{m}{1+m_E}c(TE)\cap [E]\right) \\
&=\hat\nu_*\csm \left(\one_{\hD}-\frac{m}{1+m} \one_E\right)\quad.
\end{align*}}
Since $\oS$ is nonsingular and of codimension~2 in $\oD$, the exceptional
divisor $E$ is a $\Pbb^1$-bundle over $\oS$. As $\oS$ maps isomorphically
to $S$, we have
$$\hat\nu_* (\one_E) = 2\cdot \one_S\quad.$$
By the same token,
$$\hat\nu_* (\one_{\hD})= \pi_*(\one_{\oD})+\one_S\quad.$$
Thus
$$\hat\nu_* \left(\one_{\hD}-\frac{m}{1+m} \one_E\right)
=\pi_*(\one_{\oD}) -\one_S + \frac 2{1+m} \one_S\quad,$$
and the statement follows by applying $\csm$ and using its properties
(1), (2) listed in~\S\ref{math}.
\end{proof}

The theorem follows immediately from Lemma~\ref{specula}
and Theorem~\ref{tadpole}.

Theorem~\ref{MF} recovers the tadpole relation of \S\ref{phytadc} 
`without correction terms':
$$2\,\chi(Y)=\chi^{(\infty)}(D)+4\,\chi(O)\quad,$$
in full alignment with the string theory prediction obtained in \S\ref{phytadc}.
However, it is of course unclear at this point whether this is due to a lucky
accident, or whether the formalism leading to the definition of $c^{(\infty)}
(D)$ can really account for the relevant information at a good level
of generality.



\end{document}